\renewcommand\footnotetextcopyrightpermission[1]{} 
  \providecommand\BibTeX{{%
    \normalfont B\kern-0.5em{\scshape i\kern-0.25em b}\kern-0.8em\TeX}}}
\newtheorem{claim}{Claim}
\newtheorem{remark}{Remark}
\newcommand{\rank}{\textrm{rank}}
\newcommand{\Z}{\mathbb{Z}} 
\newcommand{\GF}[1]{\mathbb{F}_{#1}} 
\newcommand{\dsample}[2]{{#1}^{(\downarrow #2)}} 
\newcommand{\shift}[2]{{#1}^{(#2)}} 
\newcommand{\set}[1]{\mathcal{#1}}
\newcommand{\vect}[1]{\mathbf{#1}}
\newcommand{\rand}[1]{\mathbf{#1}}
\newcommand{\w}[1]{\mathbf{w}^{#1}}
\newcommand{\wloc}[3]{{\mathbf{w}}^{#1}_{#2,#3}}
\newcommand{\gradg}[3]{g^{#1}_{#2,#3}}
\newcommand{\Expsub}[2]{\mathbb{E}_{#1}\left[#2\right]}
\newcommand{\etal}{\eta_l}
\newcommand{\bigOh}[1]{\mathcal{O}\left(#1\right)}
\newcommand{\kpk}[1]{\mathsf{pk}_{#1}}
\newcommand{\ksk}[1]{\mathsf{sk}_{#1}}
\newcommand{\cbrack}[1]{\left\{#1\right\}}
\newcommand{\sh}[2]{\mathsf{sh}_{#1\rightarrow#2}}
\newcommand{\shi}[1]{\mathsf{sh}_{#1}}
\newcommand{\esh}[2]{c_{#1 \rightarrow #2}}
\newcommand{\nsubi}[1]{n_{#1}}
\newcommand{\proc}[1]{\textsc{#1}} 
\newcommand{\share}[2]{\left[#1\right]_{#2}} 
\newcommand{\Hcond}[2]{H\left(#1\mid#2\right)} 
\newcommand{\Hp}[1]{H\left(#1\right)} 
\newcommand\blfootnote[1]{%
  \begingroup
  \renewcommand\thefootnote{}\footnote{#1}%
  \addtocounter{footnote}{-1}%
  \endgroup
}
\begin{document}

\title{\textsc{FastSecAgg}: Scalable Secure Aggregation for Privacy-Preserving Federated Learning}

\author{Swanand Kadhe, Nived Rajaraman, O. Ozan Koyluoglu, and Kannan Ramchandran}
\affiliation{%
    \department{Department of Electrical Engineering and Computer Sciences}
  \institution{University of California Berkeley}
}
\email{{swanand.kadhe, nived, ozan.koyluoglu, kannanr}@berkeley.edu}

\renewcommand{\shortauthors}{Kadhe et al.}

\begin{abstract}
Recent attacks on federated learning demonstrate that keeping the training data on clients' devices does not provide sufficient privacy, as the model parameters shared by clients can leak information about their training data. A \emph{secure aggregation} protocol enables the server to aggregate clients' models in a privacy-preserving manner. However, existing secure aggregation protocols incur high computation/communication costs, especially when the number of model parameters is larger than the number of clients participating in an iteration -- a typical scenario in federated learning. 

In this paper, we propose a secure aggregation protocol, \proc{FastSecAgg}, that is efficient in terms of computation and communication, and robust to client dropouts. The main building block of \proc{FastSecAgg} is a novel  multi-secret sharing scheme, \textsc{FastShare}, based on the Fast Fourier Transform (FFT), which may be of independent interest. 
\proc{FastShare} is information-theoretically secure, and achieves a trade-off between the number of secrets, privacy threshold, and dropout tolerance.
Riding on the capabilities of \proc{FastShare}, we prove that \proc{FastSecAgg} is (i) secure against the server colluding with {\it any} subset of some constant fraction (e.g. $\sim10\%$) of the clients in the honest-but-curious setting; and (ii) tolerates dropouts of a {\it random} subset of some constant fraction (e.g. $\sim10\%$) of the clients.
\proc{FastSecAgg} achieves significantly smaller computation cost than existing schemes while achieving the same (orderwise) communication cost. 
In addition, it guarantees security against adaptive adversaries, which can perform client corruptions dynamically during the execution of the protocol.
\end{abstract}

\begin{CCSXML}
<ccs2012>
   <concept>
       <concept_id>10002978.10002991.10002995</concept_id>
       <concept_desc>Security and privacy~Privacy-preserving protocols</concept_desc>
       <concept_significance>500</concept_significance>
       </concept>
 </ccs2012>
\end{CCSXML}

\ccsdesc[500]{Security and privacy~Privacy-preserving protocols}
\keywords{secret sharing, secure aggregation, federated learning, machine learning, data sketching, privacy-preserving protocols}


\maketitle

\blfootnote{A shorter version of this paper has been accepted in ICML Workshop on Federated Learning for User Privacy and Data Confidentiality, July  2020, and CCS Workshop on Privacy-Preserving Machine Learning in Practice, November 2020.
}

\section{Introduction}
\label{intro}
Federated Learning (FL) is a distributed learning paradigm that enables a large number of clients to coordinate with a central server to learn a shared model while keeping all the training data on clients' devices. 
In particular, training a neural net in a federated manner is an iterative process. 
In each iteration of FL, the server  selects a subset of clients, and sends them the current global model. 
Each client runs several steps of mini-batch stochastic gradient descent, and communicates its model update---the difference between the local model and the received global model---to the server. 
The server aggregates the model updates from the clients to obtain an improved global model, and moves to the next iteration.

FL ensures that every client keeps their local data on their own device, and sends only model updates that are functions of their dataset. Even though any update cannot contain more information than the client's local dataset, the updates may still leak significant information making them vulnerable to inference and inversion attacks~\cite{Nasr:19:SP,Fredrikson:15:CCS,Shokri:17:SP,Ganju:18:CCS}. 
Even well-generalized deep models such as DenseNet can leak a significant amount of information about their training data~\cite{Nasr:19:SP}.
In fact, certain neural networks (e.g., generative text models) trained on sensitive data (e.g., private text messages) can  memorize the training data~\cite{Carlini:19:Usenix}.

{\it Secure aggregation} protocols can be used to provide strong privacy guarantees in FL. At a high level, secure aggregation is a secure multi-party computation (MPC) protocol that enables the server to compute the sum of clients' model updates without learning any information about any client's individual update~\cite{Bonawitz:SecAgg:17}. Secure aggregation can be efficiently composed with differential privacy to further enhance privacy guarantees~\cite{Bonawitz:SecAgg:17,Goryczka:17,Truex:19:CCS}.

The secure aggregation problem for securely computing the summation has received significant research attention in the past few years, see \cite{Goryczka:17} for a survey. However, the FL setup presents unique \textit{challenges} for a secure aggregation protocol~\cite{Bonawitz:19:scale,Kairouz:19:survey}:
\begin{enumerate}
    \item \textit{Massive scale:} 
    Up to 10,000 users can be selected to participate in an iteration.
    \item \textit{Communication costs:} Model updates can be high-dimensional vectors, e.g., the ResNet model consists of hundreds of millions of parameters~\cite{He:16:ResNet}. 
    \item  \textit{Dropouts:} Clients are mobile devices which can drop out at any point. The percentage of dropouts can go up to 10\%.
    \item \textit{Privacy:} It is  crucial to provide strongest possible privacy guarantees since malicious servers can launch powerful attacks by using \textit{sybils}~\cite{Douceur:02:sybil}, wherein the adversary simulates a large number of fake client devices~\cite{Fung:19:FLsybil}.
\end{enumerate}

\begin{table*}[!t]
\vskip 0.15in
\begin{center}
    {\small
    \begin{tabular}{|l||c|c|c|c|c|c|}
    \hline
    {Protocol} & 
     \begin{tabular}{@{}c@{}}Computation \\ (Server) \end{tabular}   &
    \begin{tabular}{@{}c@{}}Communication \\ (Server) \end{tabular}   &
    \begin{tabular}{@{}c@{}}Computation \\ (Client) \end{tabular}   &
    \begin{tabular}{@{}c@{}}Communication \\ (Client) \end{tabular} & 
    Adversary & 
    Dropouts\\
    \hline
    \hline
    \textsc{SecAgg} ~\cite{Bonawitz:SecAgg:17} & 
    $\bigOh{L N^2}$ & 
    $\bigOh{LN + N^2}$ & 
    $\bigOh{LN + N^2}$ & 
    $\bigOh{L + N}$ &
    Adaptive &
    Worst-case\\
    \hline
    \textsc{TurboAgg} ~\cite{So:20:TurboAgg} &
    $\bigOh{L\log N \log^2\log N}$ &
    $\bigOh{LN\log N}$ &
    $\bigOh{L\log N \log^2\log N}$ &
    $\bigOh{L\log N}$ &
    Non-adaptive &
    Average-case\\
    \hline
    \textsc{SecAgg+} ~\cite{Bell:20:SecAggPlus} &
    $\bigOh{LN \log N + N \log^2 N}$ &
    $\bigOh{LN + N\log N }$ &
    $\bigOh{L\log N  + \log^2 N}$ &
    $\bigOh{L + \log N }$ &
    Non-adaptive &
    Average-case\\
    \hline
    \rowcolor{gray!30}
    \textsc{FastSecAgg} &
    ${\bigOh{L\log N}}$ &
    ${\bigOh{LN + N^2}}$ &
    $\bigOh{L\log N}$ &
    ${\bigOh{L + N}}$ &
    Adaptive &
    Average-case\\
    \hline
    \end{tabular}
    }
    \vspace{1mm}
    \caption{Comparison of the proposed \textsc{FastSecAgg} with \textsc{SecAgg}~\cite{Bonawitz:SecAgg:17}, \textsc{TurboAgg}~\cite{So:20:TurboAgg}, and \textsc{SecAgg+}~\cite{Bell:20:SecAggPlus}. Here $N$ is the total number of clients and $L$ is the length of model updates (i.e., client inputs). 
    An adaptive adversary can choose which clients to corrupt during the protocol execution, whereas the corruptions happen before the protocol execution starts for a non-adaptive adversary. 
    Worst-case (resp. average-case) dropout guarantee ensures that any (resp. a random) subset of clients of a bounded size may drop out without affecting the correctness and security. 
    \proc{TurboAgg} requires at least $\log N$ rounds, which increases the overheads. 
    The protocols in \cite{Truex:19:CCS, Xu:19:hybridalpha} assume a trusted third party for key distribution (see Sec.~\ref{sec:comparison} for details). 
    \label{tbl:comparison}
    }
    \end{center}
\end{table*}

Due to these challenges, existing secure aggregation protocols for FL \cite{Bonawitz:SecAgg:17,Truex:19:CCS,Xu:19:hybridalpha,So:20:TurboAgg,Bell:20:SecAggPlus} either incur heavy computation and communication costs or provide only weaker forms of privacy guarantees, which limits their scalability (see Table~\ref{tbl:comparison}). 

Secret sharing \cite{Beimal:11} is an elegant primitive that is used as a building block in a large number of secure summation (and in general secure MPC) protocols, see, e.g., \cite{Goryczka:17,Evans:18:now}. Secret sharing based secure aggregation protocols are often more efficient than other approaches such as homomorphic encryption and Yao's garbled circuits \cite{Burkhart:10:sepia,Acs:11:dream,Goryczka:17}. However, 
na\"ively using the popular Shamir's secret sharing scheme \cite{Shamir:79} for aggregating length-$L$ updates from $N$ clients incurs 
$\bigOh{LN^2}$ communication and computation cost, which is intractable at the envisioned scale of FL.  

In this paper, we propose \proc{FastShare}---a novel secret sharing scheme based on a finite-field version of the Fast Fourier Transform (FFT)---that is designed to leverage the FL setup. 
\textsc{FastShare} is a \textit{multi-secret} sharing scheme, which can {\it simultaneously} share $S$ secrets to $N$ clients such that (i) it is information-theoretically secure against a collusion of \textit{any} subset of some constant fraction (e.g. $\sim 10\%$) of the clients; and (ii) with overwhelming probability, it is able to recover the secrets with $\bigOh{N\log N}$ complexity even if a \textit{random} subset of some constant fraction (e.g. $\sim 10\%$) of the clients drop out. 
{In general, \proc{FashShare} can achieve a trade-off between the security and dropout parameters (see Theorem~\ref{thm:main}); thresholds of $10\%$ are chosen to be well-suited in the FL setup.} 
We note that a larger number of  \textit{arbitrary} (or even adversarial) dropouts does not affect security, and may only result in a failure to recover the secrets (i.e, affects only correctness).
\proc{FastShare} uses ideas from sparse-graph codes~\cite{Luby:98,Richardson:01,Justesen:11} and crucially exploits \textit{insights derived from a spectral view of codes}~\cite{Blahut:79,Pawar:18}, which enables us to prove strong privacy guarantees.

Using \proc{FastShare} as a building block, we propose \proc{FastSecAgg} -- a scalable secure aggregation protocol for FL that has low computation and communication costs and strong privacy guarantees (see Table~\ref{tbl:comparison} for a comparison of costs).
\proc{FastSecAgg} achieves low costs by riding on the capabilities of \proc{FastShare} and leveraging the following unique attribute of FL.
Privacy breaches in FL are inflicted by malicious/adversarial actors (e.g., a server using sybils), requiring \textit{worst-case} guarantees; whereas dropouts occur due to non-adversarial/natural causes (e.g., system dynamics and wireless outages)~\cite{Bonawitz:19:scale}, making \textit{statistical or average-case} guarantees sufficient. 
\proc{FastShare} achieves  $\bigOh{N\log N}$ computation cost by relaxing the dropout constraint from worst-case to random.

For $N$ clients, each having a length-$L$ update vector, \proc{FastSecAgg} requires $\bigOh{L \log N}$ computation and $\bigOh{LN + N^2}$ communication at the server, and $\bigOh{L \log N}$ computation and $\bigOh{L + N}$ communication per client. We compare the costs and setups with existing protocols in Table~\ref{tbl:comparison} (see Sec.~\ref{sec:comparison} for details). We point out that the computation cost at the server was observed to be the key barrier to scalability in~\cite{Bonawitz:SecAgg:17,Bonawitz:19:scale}. \proc{FastSecAgg} significantly reduces the computation cost at the server as compared to prior works. 
When the length of model updates $L$ is larger than the number of clients $N$, \proc{FastSecAgg} achieves the same (order-wise) communication cost as the prior works. 
We note that, in typical FL applications, the number of model parameters $(L)$ is in millions, whereas the number of clients $(N)$ is up to tens of thousands~\cite{McMahan:17,Bonawitz:19:scale}. 
In addition, \proc{FastSecAgg} guarantees security against an {\it adaptive} adversary, which can corrupt clients adaptively during the execution of the protocol. This is in contrast with the protocols in~\cite{So:20:TurboAgg,Bell:20:SecAggPlus}, which are secure only against non-adaptive (or static) adversaries, wherein client corruptions happen before the protocol executions begins.



We note that, although \proc{FastShare} secret sharing scheme is inspired from the federated learning setup, it may be of independent interest. In fact, \proc{FastShare} is a cryptographic primitive that can be used for secure aggregation (and secure multi-party computation in general) in wide applications including smart meters~\cite{Smart-meter:11}, medical data collection~\cite{Health-data:10}, and syndromic surveillance~\cite{Biosurveillance:06}.

\section{Problem Setup and Preliminaries}
\label{sec:problem-setup}

\subsection{Federated Learning}
\label{sec:federated-learing}
We consider the setup in which a fixed set of $M$ clients, each having their local dataset, coordinate with the server to jointly train a model. 
The $i$-th client's data is sampled from a distribution $\set{D}_i$. 
The federated learning problem can be formalized as minimizing a sum of stochastic functions defined as
\begin{equation}
    \label{eq:federated-learning}
    \arg\min_{\mathbf{w}\in\mathbb{R}^d} \left\{\ell(\mathbf{w}) = \frac{1}{M}\sum_{i=1}^{M}\ell_{i}(\vect{w}) 
    \right\},
\end{equation}
where $\ell_{i}(\vect{w}) = \Expsub{\zeta \sim \mathcal{D}_i}{\ell_{i}(\vect{w};\zeta)}$ is the expected loss of the prediction on the $i$-th client's data made with model parameters $\vect{w}$.

Federated Averaging (\textsc{FedAvg}) is a synchronous update scheme that proceeds in rounds of communication~\cite{McMahan:17}. At the beginning of each round (called iteration), the server selects a subset $\set{C}$ of $N$ clients (for some $N\leq M$). Each of these clients $i\in\set{C}$ copies the current model parameters $\wloc{t}{i}{0} = \w{t}$, and performs $T$ steps of (mini-batch) stochastic gradient descent steps to obtain its local model $\wloc{t}{i}{T}$; each local step $k$ is of the form $\wloc{t}{i}{k} \leftarrow \wloc{t}{i}{k-1} - \etal \gradg{t}{i}{k-1}$, where $\gradg{t}{i}{k-1}$ is an unbiased stochastic gradient of $\ell_{i}$ at $\wloc{t}{i}{k-1}$, and $\etal$ is the local step-size.\footnote{In practice, clients can make multiple training passes (called epochs) over its local dataset with a given step size. Further, typically client datasets are of different size, and the server takes a weighted average with weight of a client proportional to the size of its dataset. See~\cite{McMahan:17} for details.}
Then, each client $i\in\set{C}$ sends their update as $\Delta{\vect{w}}^{t}_{i}=\wloc{t}{i}{L} - \w{t}$. At the server, the clients' updates $\Delta{\vect{w}}^{t}_{i}$ are aggregated to form the new server model as $\w{t+1} = \w{t} + \frac{1}{|\set{C}|}\sum_{i\in\set{C}}\Delta\vect{w}^{t}_{i}$.

Our focus is on one iteration of \proc{FedAvg} and we omit the explicit dependence on the iteration $t$ hereafter. We assume that each client potentially compresses and suitably quantizes their model update $\Delta\vect{w}^t_i\in\mathbb{R}^d$ to obtain $\vect{u}_i\in\mathbb{Z}_R^L$, where $L \leq d$.
Our goal is to design a protocol that enables the server to securely compute $\sum_{i\in\set{C}}\vect{u}_i$. 
We describe the threat model and the objectives in the next section.

\subsection{Threat Model}
\label{sec:privacy-model}
Federated learning can be considered as a multi-party computation consisting of $N$ parties (i.e., the clients), each having their own private dataset, and an aggregator (i.e., the server), with the goal of learning a model using all of the datasets. 

\vspace{2pt}
\noindent\textbf{Honest-but-curious model:} The parties honestly follow the protocol, but attempt to learn about the model updates from other parties by using the messages exchanged during the execution of the protocol. The honest-but-curious (aka semi-honest) adversarial model is commonly used in the field of secure MPC, including prior works on secure federated learning~\cite{Bonawitz:SecAgg:17,Truex:19:CCS, So:20:TurboAgg}. 

\vspace{2pt}
\noindent\textbf{Colluding parties:} In every iteration of federated averaging, the server first samples a set $\set{C}$ of clients. The server may collude with any set of up to $T$ clients from $\set{C}$. 
The server can view the internal state and all the messages received/sent by the clients with whom it colludes. 
We refer to the internal state along with the messages received/sent by colluding parties (including the server) as their \textit{joint view}. (see Sec.~\ref{sec:security-analysis} for details)

\vspace{2pt}
\noindent\textbf{Dropouts:} A random subset of up to $D$ clients may drop out at any point of time during the execution of secure aggregation. 

\vspace{2pt}
\noindent\textbf{Objective:} Our objective is to design a protocol to securely aggregate clients' model updates such that the {joint view} of the server and \textit{any} set of up to $T$ clients must not leak any information about the other clients' model updates, besides what can be inferred from the output of the summation.
In addition, even if a \textit{random} set of up to $D$ clients drop out during an iteration, the server with high probability should be able to compute the sum of model updates (of the surviving clients) while maintaining the privacy. We refer to $T$ as the \textit{privacy threshold} and $D$ as the \textit{dropout tolerance}. 

\subsection{Cryptographic Primitives}
\label{sec:preliminaries}

\subsubsection{Key Agreement} 
\label{sec:key-agreement}
A key agreement protocol consists of three algorithms
(\proc{KA.param}, \proc{KA.gen}, \proc{KA.agree}). Given a security parameter $\lambda$, the parameter generation algorithm $pp \leftarrow \proc{KA.param}(\lambda)$ generates some public parameters, over which the protocol will be parameterized.
The key generation algorithm allows a client $i$ to generate a private-public key pair $(\kpk{i},\ksk{i}) \leftarrow \proc{KA.gen}(pp)$. The key agreement procedure allows clients $i$ and $j$ to obtain a private shared key $k_{i,j}\leftarrow\proc{KA.agree}(\mathsf{sk}_i, \mathsf{pk}_j)$.  
Correctness requires that, for any key pairs generated by clients $i$ and $j$ (using \proc{KA.gen} with the same parameters $pp$), $\proc{KA.agree}(\mathsf{sk}_i,\mathsf{pk}_j) = \proc{KA.agree}(\mathsf{sk}_j,\mathsf{pk}_i)$. 
Security requires that there exists a simulator $\mathsf{Sim}_{\proc{KA}}$, which takes as input an output key sampled uniformly at random and the public key of the other client, and simulates the messages of the key agreement execution such that the simulated messages are computationally indistinguishable from the protocol transcript.

\subsubsection{Authenticated Encryption} 
\label{sec:encryption}
An authenticated encryption allows two parties to communicate with data confidentially and data integrity. It consists of an encryption algorithm $\proc{AE.enc}$ that takes as input a key and a message and outputs a ciphertext, and a decryption algorithm $\proc{AE.dec}$ that takes as input a ciphertext and a key and outputs the original plaintext, or a special error symbol $\perp$. For correctness, we require that for all keys $k_i\in\{0,1\}^{\lambda}$ and all messages $m$,  $\proc{AE.dec}\left((k_i,\proc{AE.enc}(k_i,m)\right) = m$.
For security, we require semantic security under a chosen plaintext attack (IND-CPA) and ciphertext integrity (IND-CTXT) \cite{Bellare:00:crypto}.

\section{\textsc{FastShare}: FFT Based Secret Sharing}
\label{sec:F-Sec}
In this section, we present a novel, computationally efficient multi-secret sharing scheme \textsc{FastShare} which forms the core of our proposed secure aggregation protocol \proc{FastSecAgg} (described in the next section). 

A multi-secret sharing scheme splits a set of secrets into shares (with one share per client) such that coalitions of clients up to certain size have no information on the secrets, and a random\footnote{Secret sharing~\cite{Beimal:11} and multi-secret sharing~\cite{Franklin-Yung:92,Blundo:94} schemes conventionally consider \textit{worst-case} dropouts. We consider \textit{random} dropouts to exploit the FL setup.} set of clients of large enough size can jointly reconstruct the secrets from their shares. In particular, we consider information-theoretic (perfect) security. A secret sharing scheme is said to be linear if any linear combination of valid share-vectors result in a valid share-vector of the linear combination applied to the respective secret-vectors. We summarize this in the following definition.

\begin{definition}[Multi-secret Sharing]
\label{def:secret-sharing} 
Let $\GF{q}$ be a finite field, and let $S$, $T$, $D$ and $N$ be positive integers such that $S+T+D < N\leq q$. A linear multi-secret sharing scheme over $\GF{q}$ consists of two algorithms \proc{Share} and \proc{Recon}. The sharing algorithm  $\{(i,\share{\vect{s}}{i})\}_{i\in\set{C}}\leftarrow\proc{Share}(\vect{s},\set{C})$ is a probabilistic algorithm that takes as input a set of secrets $\vect{s}\in\GF{q}^{S}$ and a set $\set{C}$ of $N$ clients (client-IDs), and produces a set of $N$ shares, each in $\GF{q}$, where share $\share{\vect{s}}{i}$ is assigned to client $i$ in $\set{C}$. For a set $\set{D}\subseteq\set{C}$, the reconstruction algorithm $\{\vect{s}, \perp\}\leftarrow\proc{Recon}\left(\{(i,\share{\vect{s}}{i})\}_{i\in\set{C}\setminus\set{D}}\right)$ takes as input the shares corresponding to $\set{C}\setminus\set{D}$, and outputs either a set of $S$ field elements $\vect{s}$ or a special symbol $\perp$. The scheme should satisfy the following requirements.
\begin{enumerate}
    \item $T$-Privacy: For all $\vect{s},\vect{s}'\in\GF{q}^{S}$ and every $\set{P}\subset\set{C}$ of size at most $T$, the shares of $\vect{s}$ and $\vect{s}'$ restricted to $\set{P}$ are identically distributed. 
    \item $D$-Dropout-Resilience: For every $\vect{s}\in\GF{q}^{S}$ and any random set $\set{D}\subset\set{C}$ of size at most $D$, 
    $\proc{Recon}\left(\{(i,\share{\vect{s}}{i})\}_{i\in\set{C}\setminus\set{D}}\right) = \vect{s}$ with probability at least $1 - \text{poly}(N)$.
    \item \textit{Linearity:} If  $\{(i,\share{\vect{s}_1}{i})\}_{i\in\set{C}}$ and $\{(i,\share{\vect{s}_2}{i})\}_{i\in\set{C}}$ are sharings of $\vect{s}_1$ and $\vect{s}_2$, then  $\{(i,a\share{\vect{s}_1}{i}+b\share{\vect{s}_2}{i})\}_{i\in\set{C}}$ is a sharing of $a\vect{s}_1+b\vect{s}_2$ for any $a,b\in\GF{q}$.
\end{enumerate}
\end{definition}


Next, we describe \proc{FastShare} which can achieve a trade-off between $S$, $T$, and $D$ for a given $N$.
We begin with setting up the necessary notation. 
\proc{FastShare} leverages the finite-field Fourier transform and Chinese Remainder Theorem. Towards this end, let $\{n_0,n_1\}$ be co-prime positive integers of the same order, such that $n_0n_1$ divides $(q-1)$. 
Without loss of generality, assume that $n_0 < n_1$. 
E.g., $n_0 = 10$, $n_1 = 13$, and $q = 131$. 
We consider $N$ of the form $N = n_0n_1$, and choose the field size $q$ as a power of a prime such that $N$ divides $q-1$. 
By the co-primeness of $n_0$ and $n_1$, applying the Chinese Remainder Theorem, any number $j \in \{0,\cdots,n-1 \}$ can be uniquely represented in a 2-dimensional (2D) grid as a tuple $(a,b)$ where $a = j \mod n_0$ and $b = j \mod n_1$. 


\begin{figure}[!t]
    \centering
    \includegraphics[scale=0.45]{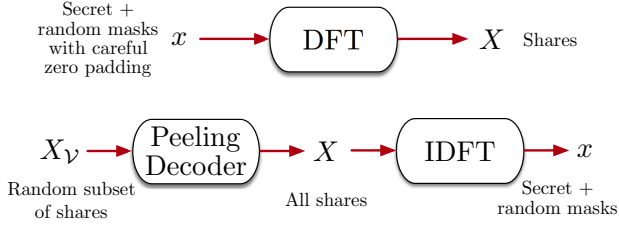}
    \caption{\proc{FastShare} to generate shares, and \proc{FastRecon} to recover the secrets from a subset of shares.}
    \label{fig:SS-share}
\end{figure}

\subsection{\proc{FastShare} to Generate Shares}
\label{sec:Fast-Share}
The sharing algorithm  $\{(i,\share{\vect{s}}{i})\}_{i\in\set{C}}\leftarrow\proc{FastShare}(\vect{s},\set{C})$ is a probabilistic algorithm that takes as input a set of secrets $\vect{s}\in\GF{q}^{S}$, and a set $\set{C}$ of $N$ clients (client-IDs), and produces a set of $N$ shares, each in $\GF{q}$, where share $\share{\vect{s}}{i}$ is assigned to client $i$.
At a high level, \proc{FastShare} consists of two stages. First, it constructs a length-$N$ ``signal'' consisting of the secrets and random masks with zeros placed at judiciously chosen locations. Second, it takes the fast Fourier transform of the signal to generate the shares (see Fig.~\ref{fig:SS-share}). The shares can be considered as the ``spectrum'' of the signal constructed in the first stage.

\begin{figure}[!t]
    \centering
    \includegraphics[width=0.475\textwidth]{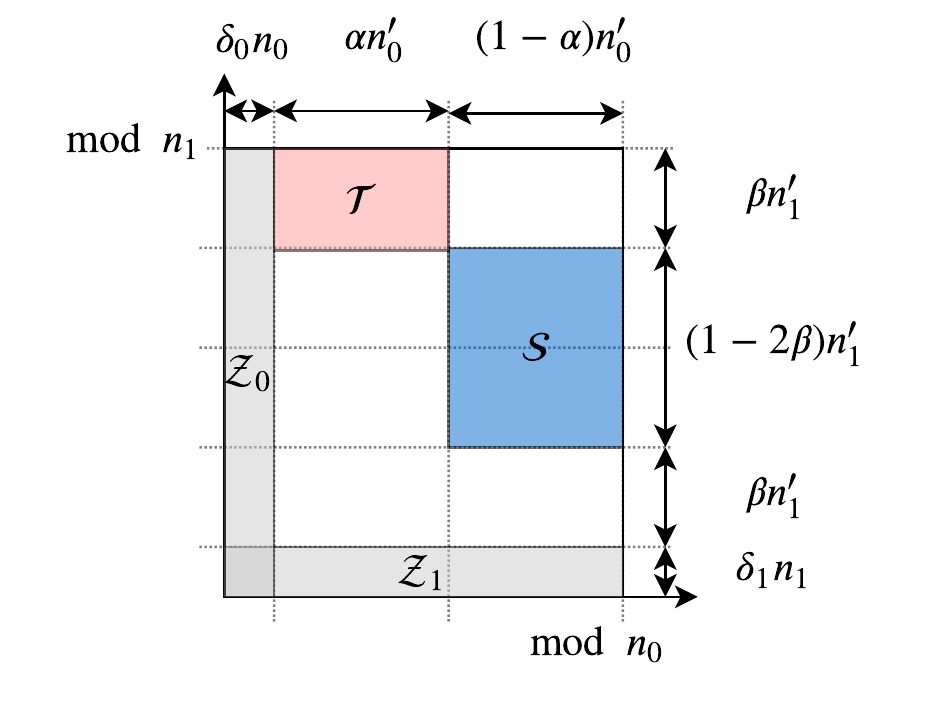}
    \caption{Indices assigned to a grid using the Chinese remainder theorem, and sets $\set{Z}_0$, $\set{Z}_1$, $\set{S}$, and $\set{T}$. Here, we define $n_i' = (1-\delta_i)n_i$ for $i\in\{0,1\}$. 
    Set $\set{S}$ is used for secrets, and sets $\set{Z}_0$ and $\set{Z}_1$ are used for zeros, and the remaining locations are used for random masks. Set $\set{T}$ is used in our privacy proof. 
    }
    \label{fig:2D-secrets}
\end{figure}

In particular, the ``signal'' is constructed as follows. 
Given fractions $\delta_0, \delta_1, \alpha \in (0,1)$ and $\beta\in(0,1/2)$, we define the sets of tuples $\set{Z}_0$, $\set{Z}_1$, $\set{S}$, and $\set{T}$ using the grid representation as depicted in  Fig.~\ref{fig:2D-secrets}. We give the formal definitions below. Here, we round any real number to the largest integer no larger than it, i.e., round $x$ to $\lfloor x\rfloor$, and omit the floor sign for the sake of brevity. 
\begin{IEEEeqnarray}{rCl}
    \label{eq:set-Z-0}
    \set{Z}_0 &=& \left\{(a,b) : 0\leq a \leq \delta_0 n_0-1\right\},\\
    \label{eq:set-Z-1}
    \set{Z}_1 &=& \left\{(a,b) : 0\leq b \leq \delta_1 n_1-1\right\},\\
    \set{S} &=& \left\{(a,b) : \delta_0n_0+\ \alpha (1-\delta_0)n_0 \leq a \leq n_0 - 1,\right.\nonumber\\ 
    \label{eq:set-S}
    &{}& \:\:\left.\delta_1n_1 +  \beta (1-\delta_1)n_1 \leq b \leq \delta_1n_1 +  (1-\beta)(1-\delta_1)n_1\right\},\\
    \set{T} &=& \left\{(a,b) : \delta_0n_0\leq a \leq \delta_0n_0 + \alpha(1-\delta_0)n_0-1,\right.\nonumber\\
    \label{eq:set-T}
    &{}& \:\:\left.\delta_1n_1 +  (1-\beta) (1-\delta_1)n_1 \leq b \leq n_1-1\right\}.
\end{IEEEeqnarray}

We place zeros at indices in $\set{Z}_0$ and $\set{Z}_1$, and secrets at indices $\set{S}$. 
Each of the remaining indices is assigned a uniform random mask from $\GF{q}$ (independent of other masks and the secrets). 
We use $\set{T}$ in our privacy proof.
Let $\vect{x}$ denote the resulting length-$N$ vector, which we refer to as the ``signal''.
(See Fig.~\ref{fig:SS-share-example} for a toy example.)

Let $\omega$ be a primitive $N$-th root of unity in $\GF{q}$. \proc{FastShare} computes the (fast) Fourier transform $\vect{X}$ of the signal $\vect{x}$ generated by $\omega$.\footnote{See Appendix~\ref{app:finite-field-DFT} for a brief overview of the finite field Fourier transform.} 
The coefficients of $\vect{X}$ represent shares of $\vect{s}$, i.e., $\share{\vect{s}}{i} = \vect{X}_i$. 
The details are given in Algorithm~\ref{alg:generate-shares}.

\begin{figure}[!t]
    \centering
    \includegraphics[scale=0.65]{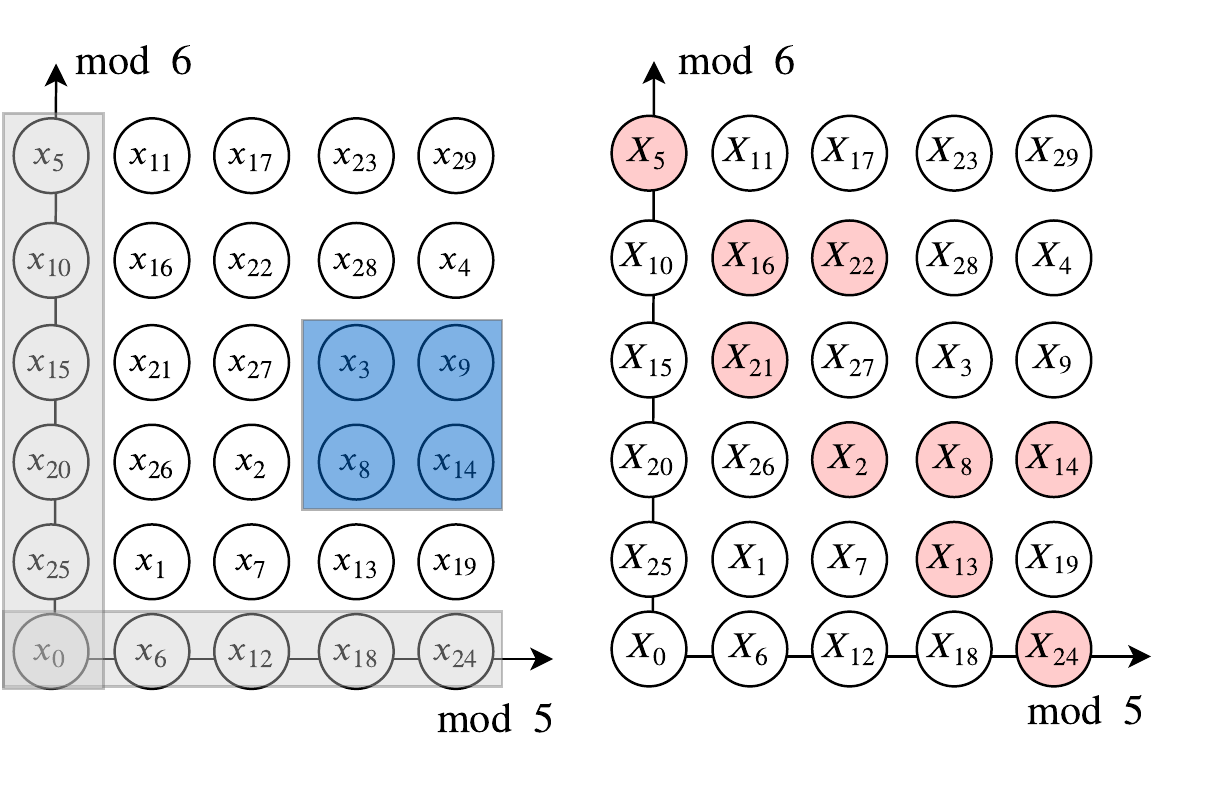}
    \caption{Example with $\ell = 4$ secrets, and co-prime integers $n_0 = 5$, $n_1 = 6$, yielding $N = n_0n_1 = 30$. Zeros are placed at gray locations, and the secrets at blue locations. Careful zero padding induces parity checks on the shares in each row and column due to \emph{aliasing}, i.e., $[ X_0+X_5+\cdots+X_{25} ; X_1+\cdots+X_{26} ; \cdots ; X_4+\cdots+X_{29}] = [0,\cdots,0]$ and $[X_0+X_6+\cdots+X_{24} ; \cdots ; X_5+\cdots+X_{29}] = [0,\cdots,0]$. Any one missing share in a row or column can be recovered using the parity-check structure. E.g., dropped out red shares  can be recovered by \emph{iteratively decoding} the missing shares in rows and columns.}
    \label{fig:SS-share-example}
\end{figure}

\begin{algorithm}[tb]
  \caption{\proc{FastShare} Secret Sharing}
  \label{alg:generate-shares}
\begin{algorithmic}
\Procedure{\textsc{FastShare}}{$\vect{s},\set{C}$}
    \State {\bfseries parameters:} finite field $\GF{q}$ of size $q$, a primitive root of unity $\omega$ in $\GF{q}$, privacy threshold $T$, dropout resilience $D$
    \State {\bfseries inputs:} secret $\vect{s}\in\GF{q}^L$, set of $ N$ clients $\set{C}$ (client-IDs)
    \State {\bfseries initialize:} sets $\set{Z}_0$, $\set{Z}_1$, $\set{S}$ as per~\eqref{eq:set-Z-0},~\eqref{eq:set-Z-0},~\eqref{eq:set-S}, respectively; an arbitrary bijection $\sigma:\set{S}\to\{0,1,\ldots,S-1\}$
      \For{$j=0$ to $n-1$}
    \If{$j\in\set{Z}_0\cup\set{Z}_1$}
        \State $x_j \leftarrow 0$
    \ElsIf{$j\in\set{S}$}
        \State $x_{j} \leftarrow \vect{s}_{\sigma(j)}$
        \Comment{$\vect{s}_i$ is the $i$-th coordinate of $\vect{s}$}
    \Else
        \State $x_j \stackrel{\$}{\leftarrow} \GF{q}$
    \EndIf
    \EndFor
    \State $X \leftarrow FFT_{\omega}(x)$
    \State {\bf Output:} $\{(i, \share{\vect{s}}{i})\}_{i\in\set{C}} \leftarrow \{(i, X_i)\}_{i=0}^{N-1}$
\EndProcedure
\State {}
\Procedure{\textsc{FastRecon}}{$\cbrack{(i,\share{\vect{s}}{i})}_{i\in\set{R}}$}
    \State {\bfseries parameters:} finite field $\GF{q}$ of size $q$, a primitive root of unity $\omega$ in $\GF{q}$, privacy threshold $T$, dropout resilience $D$, number of iterations $J$, bijection $\sigma$ and set $\set{S}$ used in \proc{FastShare}
    \State {\bfseries input:} subset of shares with client-IDs $\cbrack{(i,\share{\vect{s}}{i})}_{i\in\set{R}}$
    \For{iterations $j = 1$ to $J$}
        \For{rows $r = 0$ to $\nsubi{1}-1$ in parallel}
            \If{$r$ has fewer than $\delta_0 n_0$ missing shares}
                \State{Decode the missing share values by polynomial interpolation}
            \EndIf
        \EndFor
        \For{columns $c = 0$ to $\nsubi{0}-1$ in parallel}
            \If{$c$ has fewer than $\delta_1 n_1$ missing shares}
                \State{Decode the missing share values by polynomial interpolation}
            \EndIf
        \EndFor
    \EndFor
    \If{any missing share}
        \State{ {\bf Output:} $\perp$}
    \Else
        \State $\vect{x} \leftarrow IFFT_{\omega}(\vect{X})$
        \State {\bf Output:} $\vect{s}\leftarrow\vect{X}(\sigma(\set{S}))$
    \EndIf
\EndProcedure
\end{algorithmic}
\end{algorithm}

\subsection{\proc{FastRecon} to Reconstruct the Secrets}
\label{sec:Fast-Recon}
Let $\set{D}\subseteq\set{C}$ denote a random subset of size at most $D$. 
The reconstruction algorithm {$\{\vect{s}, \perp\} \leftarrow \proc{FastRecon}\left(\{(i,\share{\vect{s}}{i})\}_{i\in\set{C}\setminus\set{D}}\right)$} takes as input the shares corresponding to $\set{C}\setminus\set{D}$, and outputs either a set of $S$ field elements $\vect{s}$ or a special symbol $\perp$.
At a high level, \proc{FastRecon} consists of two stages. First, it \textit{iteratively} recovers all the missing shares by leveraging a {\it linear-relationship} between the shares (as described next). Second, it takes the inverse Fourier transform of the shares (i.e., the ``spectrum'') to obtain the ``signal'', which contains the secrets at appropriate locations (see Fig.~\ref{fig:SS-share}). 

The key ingredient of \proc{FastRecon} is a computationally efficient iterative algorithm, rooted in the field of coding theory, to recover the missing shares. Towards this end, we show that the shares satisfy certain ``parity-check'' constraints, which are induced by the careful placement of zeros in $\vect{x}$. (See Fig.~\ref{fig:SS-share-example} for a toy example.)

\begin{lemma}
    \label{lem:parity-checks-in-DFT}
    Let $\{X_j\}_{j=0}^{N-1}$ denote the shares produced by the \proc{FastShare} scheme for an arbitrary secret $\vect{s}$. Then, for each $i\in\{0,1\}$, for every $c \in\{0,1,\ldots,\frac{N}{n_i}\}$ and $v\in\{0,1,\ldots,\delta_i n_i-1\}$, it holds that 
    \begin{equation}
        \label{eq:shares-parity-check}
        \sum_{u=0}^{n_i-1}\omega^{-uv\frac{N}{n_i}}X_{u\frac{N}{n_i}+c} = 0.
    \end{equation}
\end{lemma}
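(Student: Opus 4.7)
The plan is to unfold the DFT definition of the shares and swap the order of summation, after which the inner sum collapses to a delta on a single residue class and the zero-padding in $\vect{x}$ finishes the argument.

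First I would fix $i\in\{0,1\}$ and write $X_k = \sum_{j=0}^{N-1}\omega^{jk}x_j$. Substituting $k = u\frac{N}{n_i}+c$ and interchanging sums gives
\[
\sum_{u=0}^{n_i-1}\omega^{-uv\frac{N}{n_i}}X_{u\frac{N}{n_i}+c}
= \sum_{j=0}^{N-1}\omega^{jc}\,x_j\sum_{u=0}^{n_i-1}\omega^{u\frac{N}{n_i}(j-v)}.
\]
Now the key observation is that $\omega^{N/n_i}$ is a primitive $n_i$-th root of unity in $\GF{q}$, so the inner geometric sum equals $n_i$ when $j\equiv v\pmod{n_i}$ and $0$ otherwise. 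This reduces the expression to $n_i\sum_{j\equiv v\,(\mathrm{mod}\,n_i)}\omega^{jc}x_j$, i.e.\ a weighted sum over a single ``row'' (for $i=0$) or ``column'' (for $i=1$) of the CRT grid.

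Next I would invoke the CRT indexing: an index $j$ corresponds to $(a,b)$ with $a=j\bmod n_0$ and $b=j\bmod n_1$, so $j\equiv v\pmod{n_i}$ picks out exactly those grid points whose $i$-th coordinate equals $v$. Because the hypothesis restricts $v$ to $\{0,\ldots,\delta_i n_i-1\}$, every such $j$ lies in $\set{Z}_i$ by definition~\eqref{eq:set-Z-0}/\eqref{eq:set-Z-1}, and \proc{FastShare} set $x_j=0$ at those positions. Hence every term in the remaining sum vanishes, proving~\eqref{eq:shares-parity-check}.

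There is no real obstacle: the only subtle point is making sure $\omega^{N/n_i}$ is indeed a primitive $n_i$-th root of unity (which is immediate from $\omega$ being a primitive $N$-th root and $N=n_0n_1$), and being careful that the index range for $v$ exactly matches the rows (resp.\ columns) that were zeroed out in $\set{Z}_0$ (resp.\ $\set{Z}_1$). The identity is essentially a finite-field instance of the classical ``aliasing gives parity checks'' phenomenon used in sparse-graph/DFT codes, repurposed here to exhibit the linear constraints among the shares.
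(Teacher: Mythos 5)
Your proof is correct and follows essentially the same route as the paper: the paper derives the identity by applying the circular-shift and aliasing properties of the finite-field DFT to the shifted-and-subsampled signal $\vect{x}^{(v)(\downarrow n_i)}$, which is identically zero by the placement of zeros on $\set{Z}_0$ and $\set{Z}_1$, whereas you simply inline those two properties by expanding $X_{uN/n_i+c}$ from the DFT definition, swapping the order of summation, and using the geometric-sum orthogonality of the primitive $n_i$-th root $\omega^{N/n_i}$ to collapse onto the residue class $j\equiv v\pmod{n_i}$. The two arguments are the same computation in opposite packaging, and your handling of the one subtle point --- that $v<\delta_i n_i$ forces every surviving index $j$ into $\set{Z}_i$ under the CRT grid --- matches the paper's.
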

The proof essentially follows from the subsampling and aliasing properties of the Fourier transform and is deferred to Appendix~\ref{app:proof-lemma}.

\begin{remark}
\label{rem:product-codes}
When translated in coding theory parlance, the above lemma essentially states that the shares form a codeword of a \textit{product code} with Reed-Solomon component codes~\cite{MacWilliams-Sloane:78}. In particular, when the shares are represented on a 2D-grid using the Chinese remainder theorem, each row (resp. column) forms a \textit{codeword} of a \textit{Reed-Solomon code} with block-length $n_0$ (resp. $n_1$) and dimension $(1-\delta_0)n_0$ (resp. $(1 - \delta_1)n_1$).
In other words, 
$\bar{X}_c = \begin{bmatrix}
    X_{c}&
    X_{\frac{N}{n_i}+c}&
    \cdots&
    X_{(n_i-1){\frac{N}{n_i}}+c}
\end{bmatrix}$ 
is a codeword of an $(n_i,(1-\delta_i)n_i)$ Reed-Solomon code for every $c=0,1,\ldots,N/n_i-1$.
To see this, observe that when the constraints in~\eqref{eq:shares-parity-check}, for a fixed $c$, are written in a matrix form, the resulting matrix is a Vandermonde matrix, and it is straightforward to show that $\bar{X}_c$ corresponds to the evaluations of a polynomial of degree $(1-\delta_i)n_i-1$ at $\omega^{-uN/{n_i}}$ for $u = 0,1,\ldots,n_i-1$. 
We present a comparison with the Shamir's scheme in Appendix~\ref{app:LRCs}.
\end{remark}

These parity-check constraints on the shares make it possible to iteratively recover missing shares from each row and column until all the missing shares can be recovered. We present a toy example for this in Fig.~\ref{fig:SS-share-example}. It is worth noting that this way of recovering the missing symbols of a codeword is known in coding theory as an \emph{iterative (bounded distance) decoder}~\cite{Richardson:01,Justesen:11}. 

In particular, codewords of a Reed-Solomon code with block-length $n_i$ and dimension $(1-\delta_i)n_i$ are evaluations of a polynomial of degree at most $(1-\delta_i)n_i-1$. Therefore, any $\delta_i$ fraction of erasures can be recovered via polynomial interepolation. Therefore, if a row (resp. column) has less than $\delta_0n_0$ (resp. $\delta_1n_1$) missing shares, then they can be recovered. This process is repeated for a fixed number if iterations, or until all the missing shares are recovered. 

Putting things together, \proc{FastRecon} first uses an iterative decoder to obtain the missing shares. If the peeling decoder fails, it outputs $\perp$, and declares failure. Otherwise, it takes the inverse (fast) Fourier transform of $\vect{X}$ (generated by $\omega$) to obtain $\vect{x}$. Finally, it output the coordinates of $\vect{x}$ indexed by $\set{S}$ (in the 2D-grid representation) as the secret. 
The details are given in Algorithm~\ref{alg:generate-shares}.


\subsection{Analysis of \textsc{FastShare}}
\label{sec:FastShare-analysis}
First, we analyze the security and correctness of  \proc{FastShare} in the honest-but-curious setting. We defer the proof to Appendix~\ref{app:FastShare-analysis}.
\begin{theorem}
    \label{thm:main}
   Given fractions $\delta_0, \delta_1, \alpha \in (0,1)$ and $\beta\in(0,1/2)$, \proc{FastShare} generates $N$ shares from  $S = (1-\alpha)(1-2\beta)(1-\delta_0)(1 - \delta_1) N$ secrets such that it satisfies 
   \begin{enumerate}
       \item $T$-privacy for $T = \alpha\beta(1-\delta_0)(1 - \delta_1) N$,
       \item $D$-dropout resilience for $D = (1 - (1-\delta_0)(1 - \delta_1)) \frac{N}{2}$, and
       \item linearity.
   \end{enumerate}   
\end{theorem}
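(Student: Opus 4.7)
The three assertions can be separated. Linearity is immediate from the linearity of the FFT: the map $(\vect{s}, \vect{m}) \mapsto \vect{x}$ built in \proc{FastShare} is affine (zeros are fixed on $\set{Z}_0 \cup \set{Z}_1$, the secrets sit in $\set{S}$ under the bijection $\sigma$, and the masks fill the remaining non-zero positions), so for any $a, b \in \GF{q}$, $a \vect{X}_1 + b \vect{X}_2 = \mathrm{FFT}(a \vect{x}_1 + b \vect{x}_2)$ is a valid sharing of $a \vect{s}_1 + b \vect{s}_2$ with mask vector $a \vect{m}_1 + b \vect{m}_2$.

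For $T$-privacy, I fix an arbitrary $\set{P} \subset \set{C}$ of size at most $T$ and, letting $F$ denote the length-$N$ DFT matrix and $\bar{\set{Z}} = \set{C} \setminus (\set{Z}_0 \cup \set{Z}_1)$ the non-zero positions, write $\vect{X}_\set{P} = F[\set{P}, \set{S}]\,\vect{x}_\set{S} + F[\set{P}, \bar{\set{Z}} \setminus \set{S}]\,\vect{x}_{\bar{\set{Z}} \setminus \set{S}}$, where $\vect{x}_{\bar{\set{Z}} \setminus \set{S}}$ is the uniformly random mask vector. It suffices to prove the column-space containment $\mathrm{colspan}(F[\set{P}, \set{S}]) \subseteq \mathrm{colspan}(F[\set{P}, \set{T}])$: then any secret-induced shift of $\vect{X}_\set{P}$ can be perfectly cancelled by an adjustment of the $|\set{T}| = T$ mask coordinates indexed by $\set{T}$, so the conditional distribution of $\vect{X}_\set{P}$ given $\vect{s}$ is independent of $\vect{s}$. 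To establish this containment, the plan is to exploit the Chinese Remainder (Good-Thomas) indexing: since $\gcd(n_0, n_1) = 1$, the length-$N$ DFT factors as $F_{n_0} \otimes F_{n_1}$ under appropriate re-indexing, and $\set{S}, \set{T}$ become product sub-grids $A_S \times B_S$ and $A_T \times B_T$ with $|A_T|\cdot|B_T| = T$. The ranges $A_T, B_T, A_S, B_S$ are all contiguous, so the columns of the constituent DFT matrices indexed by them generate shortened Reed-Solomon generator matrices (Remark~\ref{rem:product-codes}) and inherit the MDS property; the required linear identities expressing $\set{S}$-columns via $\set{T}$-columns then follow from standard Reed-Solomon interpolation applied separately in the row and column directions of the 2D share grid.

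For $D$-dropout-resilience, I analyze \proc{FastRecon} on a uniformly random subset $\set{D} \subset \set{C}$ of size $D$. By Remark~\ref{rem:product-codes}, rows and columns of the 2D share grid are Reed-Solomon codewords of dimension $(1-\delta_0)n_0$ and $(1-\delta_1)n_1$, capable of recovering up to $\delta_0 n_0$ and $\delta_1 n_1$ erasures respectively via polynomial interpolation. I model the erasure pattern as a random bipartite graph between $n_0$ row-nodes and $n_1$ column-nodes with an edge at each dropped share, and view \proc{FastRecon} as a peeling decoder that alternately strips rows and columns whose current erasure count falls below the respective threshold. A density-evolution / cascade analysis in the spirit of sparse-graph codes \cite{Luby:98,Richardson:01}, combined with Hoeffding/Chernoff concentration of the per-row and per-column erasure counts around their means $D/n_1$ and $D/n_0$, then shows that the peeling terminates with zero residual erasures with probability at least $1 - 1/\mathrm{poly}(N)$ for the stated $D = (1 - (1-\delta_0)(1-\delta_1))N/2$.

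The main technical obstacle is the privacy step. In Shamir-style schemes any square submatrix of the DFT is a Vandermonde and therefore generically invertible, but here $F[\set{P}, \set{T}]$ can itself be rank-deficient: if $\set{P}$ concentrates in a single row of the 2D share grid and $T$ exceeds the corresponding Reed-Solomon dimension, that submatrix has rank strictly less than $T$. The strategy therefore cannot simply assert invertibility; it must establish the containment directly by showing that whenever a row-combination annihilates $F[\set{P}, \set{T}]$ it also annihilates $F[\set{P}, \set{S}]$, leveraging the complementary placement of $\set{S}$ and $\set{T}$ (disjoint row ranges $A_S, A_T$ and disjoint column ranges $B_S, B_T$) together with the contiguity of these ranges. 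Handling this combinatorics uniformly in $\set{P}$ is where I expect the bulk of the technical work.
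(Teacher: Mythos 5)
Your linearity argument and your plan for dropout resilience (peeling/density-evolution analysis of the product-code erasure pattern, with concentration of per-row and per-column erasure counts) match the paper in approach; the paper simply invokes the known density-evolution results for product codes to get the stated $D$. The gap is in the privacy step, and it is a real one: the containment $\textrm{span}(F[\set{P},\set{S}]) \subseteq \textrm{span}(F[\set{P},\set{T}])$ to which you reduce everything is false in general. Under the CRT/Good--Thomas re-indexing, take $\set{P}$ to consist of $\beta(1-\delta_1)n_1+1$ shares all having the same first CRT coordinate (this is admissible once $n_0$ is moderately large, since then $|\set{P}|\le n_1\le T$). For such a $\set{P}$, two columns of $F[\set{P},\cdot]$ whose indices share the same second CRT coordinate $j_1$ are proportional, so $\textrm{span}(F[\set{P},\set{T}])$ is spanned by the $|B_T|=\beta(1-\delta_1)n_1$ Vandermonde vectors $(\omega_1^{b j_1})_{b}$ with $j_1\in B_T$ and has dimension exactly $|B_T|<|\set{P}|$. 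Since $B_S$ is disjoint from $B_T$, the $\set{S}$-columns are Vandermonde vectors at new nodes and do not lie in that span; equivalently, there is a left null vector of $F[\set{P},\set{T}]$ that does not annihilate $F[\set{P},\set{S}]$ --- precisely the assertion your ``main technical obstacle'' paragraph hopes to rule out. Privacy is not violated in this example, but only because the adversary's residual uncertainty comes from mask positions \emph{outside} $\set{T}$ (e.g., the strip $\delta_0 n_0\le j_0<\delta_0 n_0+\alpha(1-\delta_0)n_0$ over the full range of $j_1$), so restricting the cancelling masks to $\set{T}$ cannot work.

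The statement you actually need, and the one the paper proves, is that the $\set{S}$-columns of $F[\set{P},\cdot]$ lie in the span of the columns indexed by the \emph{entire} mask set. In the paper $\set{T}$ plays only a counting role: since $|\set{T}|=T\ge|\set{P}|\ge\rank(G_{\set{P}})$, either the $\set{T}$-columns already span the whole column space (in which case one is done), or they carry a nontrivial dependency; in the latter case that dependency is propagated via the DFT shift identity (if a column $p$ lies in the span of a set of columns, then $p+\Delta$ lies in the span of the shifted set) through a two-stage induction over vertical and then horizontal shifts, culminating in $\set{S}\subseteq\textrm{span}(\set{Q})$ for a large rectangle $\set{Q}$ of mask positions. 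Your tensor-factorization idea does not rescue the $\set{T}$-only containment, because $\set{P}$ is an arbitrary subset of rows rather than a product set, so the Kronecker structure of $F$ does not descend to $F[\set{P},\cdot]$; handling arbitrary $\set{P}$ is exactly what the shift-propagation induction is for. You would need to restructure the privacy proof around the full mask set along these lines.
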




For example, choosing $\alpha = 1/2$, $\beta = 1/4$, $\delta_0 = \delta_1 = 1/10$, yields $S = 0.2 N$, $T = 0.1 N$, and $D = 0.095 N$.

Next, we present the computation cost in terms of the number of (basic) arithmetic operations in $\GF{q}$.

\vspace{2pt}
\noindent\textbf{Computation Cost:} \proc{FastShare} runs in $\bigOh{N\log N}$ time, since constructing the signal takes $\bigOh{N}$ time and the Fourier transform can be computed in $\bigOh{N\log N}$ time using a Fast Fourier Transform (FFT). \proc{FastRecon} also runs in $O (N \log N)$ time, when computations are parallelized. Specifically, recovering the missing shares in a {\it row} or {\it column} of shares (when arranged in the 2D-grid) can be done in $\bigOh{n_i^2} = \bigOh{N}$ complexity by leveraging that rows and columns are Reed-Solomon codewords, which are evaluations of a degree-$(n_i-\delta_i n_i - 1)$  polynomial. Since all rows and columns can be decoded in parallel, and decoding is carried out for a constant number of iterations, the iterative decoder runs in $\bigOh{N}$ time. The second step is the inverse Fourier transform, which can be computed in $\bigOh{N\log N}$ time using an FFT.

Note that in practical FL scenarios, the number of clients typically scales up to ten thousand~\cite{Bonawitz:19:scale}. Therefore, in order to gain full parallelism, one needs to have $\sqrt{N} \approx 100$ processors/cores. In the next section, we present a variant of \proc{FastShare} that removes the requirement of parallelism for sufficiently large $N$. Moreover, this variant also allows us to achieve a more favorable trade-off between the number of shares, privacy threshold, and dropout tolerance.

\subsection{Improving the Performance for Large Number of Clients}
\label{sec:row-codes}
\proc{FastShare} ensures that when the shares are arranged in a 2D-grid using the Chinese remainder theorem, each row and column satisfies certain parity-check constraints (cf. Remark~\ref{rem:product-codes}). As we show next, when $N$ is sufficiently large, it suffices to ensure that only rows (or columns) satisfy the parity-check constraints. In this case, the \proc{FastShare} algorithm remains the same as before except for the placement of secrets, zeros, and random masks changes slightly.

\begin{figure}[!t]
    \centering
    \includegraphics[width=0.475\textwidth]{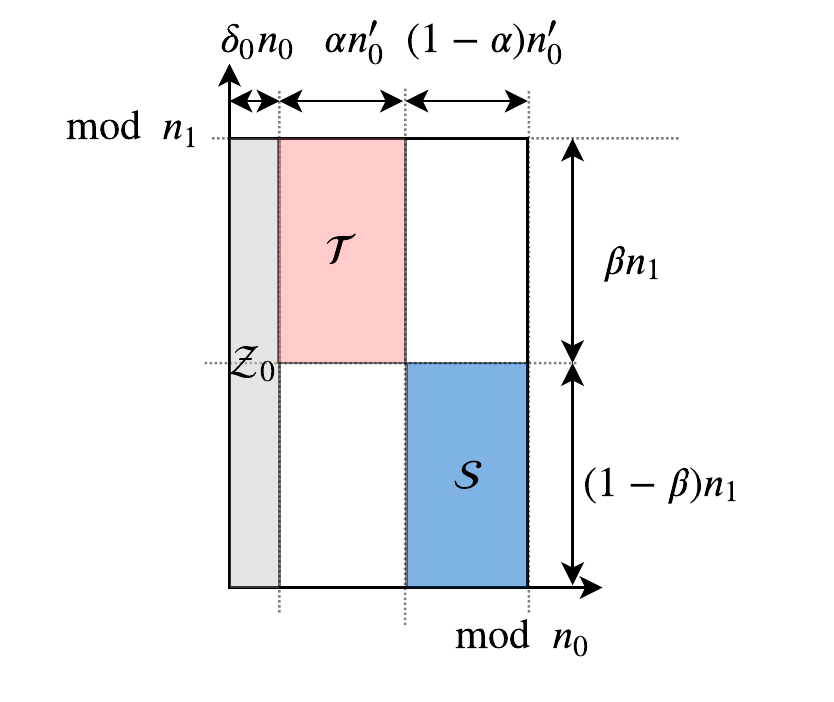}
    \caption{Indices assigned to a grid, and sets $\set{Z}$, $\set{S}$, and $\set{T}$. Here, we define $n_0' = (1-\delta_0)n_0$. 
    Set $\set{S}$ is used for secrets, $\set{Z}_0$ is used for zeros, and the remaining locations are used for random masks. Set $\set{T}$ is used in our privacy proof. 
    }
    \label{fig:2D-secrets-row-codes}
\end{figure}

Let $q$ be a power of a prime, and let $N$ be a positive integer such that $N$ divides $q-1$. Let $c \geq 1$ be a  constant. We consider $N$ of the form $N = n_0 n_1$ such that $n_0 = c \log N$. 
Given fractions $\delta_0, \alpha, \beta \in (0,1)$, we define the sets of tuples $\set{Z}_0$, $\set{S}$, and $\set{T}$ using the grid representation determined by the Chinese remainder theorem, as depicted in Fig.~\ref{fig:2D-secrets-row-codes}. We give the formal definitions below. Note that we round any rational number $x$ to $\lfloor x\rfloor$, and omit the floor sign for the sake of brevity. 
\begin{IEEEeqnarray}{rCl}
    \label{eq:set-Z-i-2}
    \set{Z}_0 &=& \left\{(a,b) : 0\leq a \leq \delta_0n_0-1\right\},\\
    \set{S} &=& \left\{(a,b) : \delta_0n_0+\ \alpha (1-\delta_0)n_0 \leq a \leq n_0 - 1,\right.\nonumber\\ 
    \label{eq:set-S-2}
    &{}& \:\:\left. 0 \leq b \leq (1-\beta) n_1 \right\},\\
    \set{T} &=& \left\{(a,b) : \delta_0n_0\leq a \leq \delta_0n_0 + \alpha(1-\delta_0)n_0-1,\right.\nonumber\\
    \label{eq:set-T-2}
    &{}& \:\:\left.(1-\beta) n_1 \leq b \leq n_1-1\right\}.
\end{IEEEeqnarray}

We construct the signal by placing zeros at indices in $\set{Z}_0$, and secrets at indices in $\set{S}$. 
Each of the remaining indices is assigned a uniform random mask from $\GF{q}$ (independent of other masks and the secrets). 
We then take the Fourier transform (generated by a primitive $N$-th root of unity in $\GF{q}$) of the signal to obtain the shares. 

Using the same arguments as in the proof of Lemma~\ref{lem:parity-checks-in-DFT}, it is straightforward to show that, for every $c \in\{0,\ldots,{n_1}-1\}$ and $v\in\{0,1,\ldots,\delta_0 n_0-1\}$, it holds that 
\begin{equation}
    \label{eq:shares-parity-check-row-codes}
    \sum_{u=0}^{n_0-1}\omega^{-uvn_1}X_{un_1+c} = 0.
\end{equation}
When translated in coding theory parlance, the above condition states that when the shares are represented in a 2D-grid using the Chinese remainder theorem, then each row forms a codeword of a Reed-Solomon code with block-length $n_0$ and dimension $(1-\delta_0)n_0$.\footnote{We note that this variant of \proc{FastShare} is related to a class of erasure codes called Locally Recoverable Codes (LRCs) (see~\cite{Gopalan:12,Prakash:12,TamoB:14}, and references therein). See Appendix~\ref{app:LRCs} for details.} 

\proc{FastRecon} first recovers the missing shares by decoding each row. If every row has less than $\delta_0$ fraction of erasures, then it recovers all the missing shares. Otherwise, it outputs $\perp$, and declares failure.  Next, it takes the inverse (fast) Fourier transform  of the shares to obtain the signal. The coordinates of the signal indexed by $\set{S}$ (in the 2D-grid representation) gives the secrets.

\vspace{4pt}
\noindent\textbf{Security and Correctness:} Given fractions $\delta_0, \alpha, \beta \in (0,1)$, this variant of \proc{FastShare} generates $N$ shares from 
\begin{equation}
    \label{eq:S-row-codes}
    S = (1-\alpha)(1-\beta)(1-\delta_0)N
\end{equation} 
secrets such that it satisfies $T$-privacy for 
\begin{equation}
    \label{eq:P-row-codes}
    T=\alpha\beta(1-\delta_0)N,
\end{equation}  
$D$-dropout-resilience for 
\begin{equation}
    \label{eq:D-row-codes}
    D = \delta_0 N
\end{equation}  
and linearity. (The proof is similar to Theorem~\ref{thm:main}, and is omitted.)

Observe that this variant achieves a better trade-off between $S$, $T$, and $D$. For instance, choosing $\delta_0 = 1/10$, $\alpha = 1/2$, and $\beta = 1/2$, yields $S = 0.225N$, $T = 0.225$, and $D = 0.1$. 

\vspace{4pt}
\noindent\textbf{Computation Cost:}
In this case, \proc{FastShare} also has $\bigOh{N\log N}$ computational cost, since constructing the signal takes $\bigOh{N}$ time and the Fourier transform can be computed in $\bigOh{N\log N}$ time using an FFT.
\proc{FastRecon} has  $\bigOh{N \log N}$ computational cost without requiring any parallelism. In particular, recovering the missing shares in a {\it row} (when arranged in the 2D-grid) can be done in $\bigOh{n_0^2} = \bigOh{\log^2 N}$ complexity by leveraging that the rows are Reed-Solomon codewords, which are evaluations of a degree-$(n_0-\delta_0 n_0 - 1)$ polynomial. Since there are $\bigOh{N/\log N}$ rows, the decoder to recover the missing shares has $\bigOh{N \log N}$ complexity. The second step is the inverse FFT, which also has $\bigOh{N\log N}$ complexity.

\section{\textsc{FastSecAgg} Based on \textsc{FastShare}}
\label{sec:FastSecAgg}

In this section, we present our proposed protocol \proc{FastSecAgg}, which allows the server to securely compute the summation of clients' model updates. We begin with a high-level overview of \proc{FastSecAgg} (see Fig.~\ref{fig:FastSecAgg}). 
Each client generates shares for its length-$L$ input (by breaking it into a sequence of $\lceil L/S \rceil$ vectors, each of length at most $S$, and treating each vector as $S$ secrets) using \textsc{FastShare}, and distributes the shares to $N$ clients. Since all the communication in FL is  mediated by the server, clients encrypt their shares before sending it to the server to prevent the server from reconstructing the secrets. Each client then decrypts and sums the shares it receives from other clients (via the server). The linearity property of \textsc{FastShare} ensures that the sum of shares is a share of the sum of secret vectors (i.e., client inputs). Each  client then sends the \textit{sum-share} to the server (as a plaintext). The server can then recover the sum of secrets (i.e., client inputs) with high probability as long as it receives the shares from a random set of clients of sufficient size. 
We note that \proc{FastSecAgg} uses secret sharing as a primitive in a standard manner, similar to several secure aggregation protocols~\cite{Ben-Or:88,Burkhart:10:sepia,Goryczka:17}.

\proc{FastSecAgg} is a three round interactive protocol. See Fig.~\ref{fig:FastSecAgg} for a high-level overview, and Algorithm~\ref{alg:FastSecAgg-algorithm} for the detailed protocol.
Recall that the model update for client $i\in\set{C}$ is assumed to be $\vect{u}_i\in\Z_R^L$, for some $R\leq q$. In practice, this can be achieved by appropriately quantizing the updates. 

\textit{Round 0} consists of generating and advertising encryption keys. Specifically, each client $i$ uses the key agreement protocol to generate a public-private key pair $(\kpk{i},\ksk{i})\leftarrow\proc{KA.gen}(pp)$, and sends their public key $\kpk{i}$ to the server. 
The server waits for at least $N-D$ clients (denoted as $\set{C}_0\subseteq\set{C}$), and forwards the received public keys to clients in $\set{C}_0$. 

\begin{figure}[!t]
    \centering
    \includegraphics[scale=0.55]{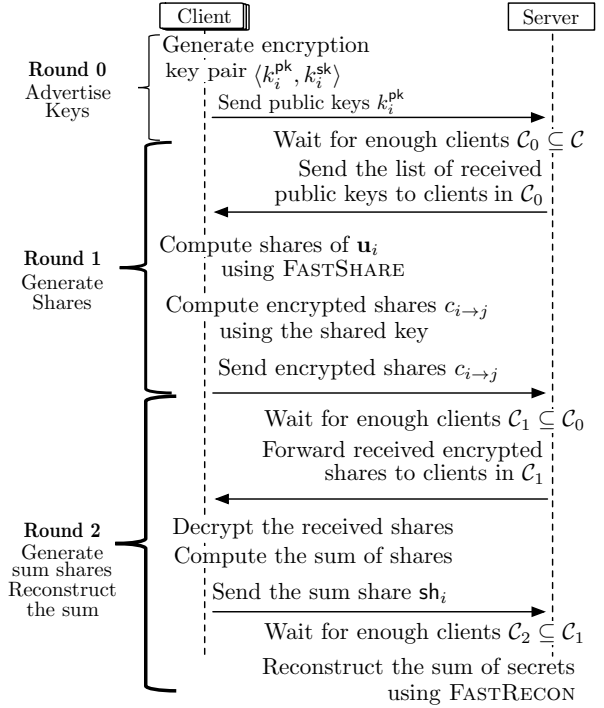}
    \caption{High-lever overview of \textsc{FastSecAgg} protocol.}
    \label{fig:FastSecAgg}
\end{figure}

\begin{algorithm*}[!h]
\caption{\textsc{FastSecAgg} Protocol}
    \begin{itemize}[topsep=0pt,itemsep=0pt,parsep=0pt,partopsep=0pt]
        \item \textbf{Parties:} Clients $1, 2, \ldots, N$ and the server 
        \item \textbf{Public Parameters:} Update length $L$, input domain $\mathbb{Z}_R$, key agreement parameter $pp\leftarrow\proc{KA.param}(\lambda)$, finite field $\GF{q}$ for \proc{FastShare} secret sharing with primitive $N$-th root of unity $\omega$
        \item \textbf{Input:} $\vect{u}_i\in\mathbb{Z}_R^L$ (for each client $i$)
        \item \textbf{Output:} $\vect{z}\in\mathbb{Z}_R^L$ (for the server)
        \vspace{2mm}
        \item \textbf{Round 0 (Advertise Keys)}\\ 
        \hspace{2mm} Client $i$:
        \begin{itemize}
            \item Generate key pairs $(\kpk{i},\ksk{i})\leftarrow\proc{KA.gen}(pp)$
            \item Send $\kpk{i}$ to the server and move to the next round
        \end{itemize}
        \hspace{2mm} Server:
        \begin{itemize}
            \item Wait for at least $N-D$  clients to respond (denote this set as $\set{C}_0\subseteq\set{C}$); otherwise, abort 
            \item Send to all clients in $\set{C}_0$ the list $\{(i,\kpk{i})\}_{i\in\set{C}_0}$, and move to the next round
        \end{itemize}
        \vspace{1mm}
        \item \textbf{Round 1} (Generate shares)\\
        \hspace{2mm} Client $i$:
        \begin{itemize}
            \item Receive the list $\{(j,\kpk{j})\}_{j\in\set{C}_0}$ from the server; Assert that $|\set{C}_0| \geq N-D$, otherwise abort
            \item Partition the input $\vect{u}_i\in\mathbb{Z}_R^L$ into $\lceil L/S \rceil$ vectors, $\vect{u}_i^1$, $\vect{u}_i^2$, $\ldots$, $\vect{u}_i^{\lceil L/S \rceil}$, such that $\vect{u}_i^{\lceil L/S \rceil}$ has length at most $S$ and all others have length $S$
            \item Compute $N$ shares by treating each $\vect{u}_i^{\ell}$ as $S$ secrets as $\{(j,\share{\vect{u}_i^{\ell}}{j})\}_{j\in\set{C}}\leftarrow\proc{FastShare}(\vect{u}_i^{\ell},\set{C})$ for $1\leq\ell\leq\lceil L/S \rceil$ (by using independent private randomness for each $\ell$); Denote client $i$'s share for client $j$ as {$\sh{i}{j} = \big(\share{\vect{u}_i^1}{j}\mid\mid \share{\vect{u}_i^2}{j}\mid\mid \cdots \mid\mid {[\vect{u}_i^{\lceil L/S\rceil}]}_{j}\big)$} 
            \item For each client $j\in\set{C}_0\setminus\{i\}$, compute encrypted share: $\esh{i}{j} \leftarrow \proc{AE.enc}(k_{i,j},i\mid\mid j\mid\mid\sh{i}{j})$, where $k_{i,j} = \proc{KA.agree}(\ksk{i},\kpk{j})$
            \item Send all the ciphertexts $\{\esh{i}{j}\}_{j\in\set{C}_0\setminus\{i\}}$ to the server by adding the addressing information $i, j$ as metadata
            \item Store all the messages received and values generated in this round and move to the next round
        \end{itemize}
        \hspace{2mm} Server:
        \begin{itemize}
            \item Wait for at least $N-D$ clients to respond (denote this set as $\set{C}_1\subseteq\set{C}_0$)
            \item Send to each client $i\in\set{C}_1$, all ciphertexts encrypted for it:
            $\cbrack{\esh{j}{i}}_{j\in\set{C}_1\setminus\{i\}}$, and move to the next round
        \end{itemize}
        \vspace{1mm}
        \item \textbf{Round 2} (Recover the aggregate update)\\
        \hspace{2mm} Client $i$:
        \begin{itemize}
            \item Receive from the server the list of ciphertexts $\cbrack{\esh{j}{i}}_{j\in\set{C}_1\setminus\{i\}}$
            \item Decrypt the ciphertext $(i'\mid\mid j'\mid\mid\sh{j}{i}) \leftarrow \mathsf{Dec}(\ksk{i},\esh{j}{i})$ for each client $j\in\set{C}_1\setminus\{i\}$, and assert that $(i=i')\: \land\:(j=j')$
            \item Compute the sum of shares over $\GF{q}$ as $\shi{i} = \sum_{j\in\set{C}_1}\sh{j}{i}$ 
            \item Send the share $\shi{i}$ to the server
        \end{itemize}
        \hspace{2mm} Server:
        \begin{itemize}
            \item Wait for at least $N-D$ clients to respond (denote this set as $\set{C}_2\subseteq\set{C}_1$)
            \item Run the reconstruction algorithm to obtain $\cbrack{\vect{z}^{\ell},\perp} \leftarrow \textsc{FastRecon}(\cbrack{(i,\shi{i}^{\ell})}_{i\in\set{C}_2})$ for $1\leq \ell \leq \lceil L/S \rceil$, where $\shi{i}^{\ell}$ is the $\ell$-th coefficient of $\shi{i}$;
            Denote $\vect{z} = [\vect{z}^1 \:\: \vect{z}^2 \:\: \cdots \:\: \vect{z}^{\lceil L/S \rceil}]$
            \item If the reconstruction algorithm returns $\perp$ for any $\ell$, then abort
            \item Output the aggregate result ${\vect{z}}$
        \end{itemize}
    \end{itemize}
    \label{alg:FastSecAgg-algorithm}
\end{algorithm*}

\textit{Round 1} consists of generating secret shares for the updates. 
Each client $i$ partitions their update $\vect{u}_i$ into $\lceil L/S \rceil$ vectors, $\vect{u}_i^1$, $\vect{u}_i^2$, $\ldots$, $\vect{u}_i^{\lceil L/S \rceil}$, such that the last vector has length at most $S$ and all others have length $S$. Treating each $\vect{u}_i^{\ell}$ as $S$ secrets, the client computes $N$ shares as \mbox{$\{(j,\share{\vect{u}_i^{\ell}}{j})\}_{j\in\set{C}}\leftarrow\proc{FastShare}(\vect{u}_i^{\ell},\set{C})$} for $1\leq\ell\leq\lceil L/S \rceil$. For simplicity, we denote client $i$'s share for client $j$ as {$\sh{i}{j} = (\share{\vect{u}_i^1}{j}\mid\mid \share{\vect{u}_i^2}{j}\mid\mid \cdots \mid\mid {[\vect{u}_i^{\lceil L/S\rceil}]}_{j})$.} 

In addition, every client receives the list of public keys from the server. Client $i$ generates a shared key $k_{i,j}\leftarrow\proc{KA.agree}(\ksk{i},\kpk{j})$ for each $j\in\set{C}_0\setminus\{i\}$, and encrypts $\sh{i}{j}$ using the shared key $k_{i,j}$ as $\esh{i}{j} \leftarrow \proc{AE.enc}\left(k_{i,j},\sh{i}{j}\right)$. The client then sends all the encrypted shares $\{\esh{i}{j}\}_{j\in\set{C}_0\setminus\{i\}}$ to the server.

The server waits for at least $N-D$ clients to respond (denoted as $\set{C}_1\subseteq\set{C}_0$).\footnote{For simplicity, we assume that a client does not drop out after initiating communication with the server, i.e., while sending or receiving messages from the server. The same assumption is also made in~\cite{Bonawitz:SecAgg:17,So:20:TurboAgg,Bell:20:SecAggPlus}. This is not a critical assumption, and the protocol and the analysis can be easily adapted if it does not hold.} Then, the server sends to each client $i\in\set{C}_1$, all ciphertexts encrypted for it: $\{\esh{j}{i}\}_{j\in\set{C}_1\setminus\{i\}}$.

\textit{Round 2} consists of generating sum-shares and reconstructing the approximate sum. Every surviving client receives the list of encrypted shares from the server. Each client $i$ then decrypts the ciphertexts $\esh{j}{i}$ using the shared key $k_{j,i}$ to obtain the shares $\sh{j}{i}$, i.e., $\sh{j}{i}\leftarrow\proc{AE.dec}(k_{j,i},\esh{j}{i})$ where $k_{j,i}\leftarrow\proc{KA.agree}(\ksk{j},\kpk{i})$. Then, each client $i$ computes the sum (over $\GF{q}$) of all the shares including its own share as: $\shi{i} = \sum_{j\in\set{C}_1}\sh{j}{i}$. 
Each client $i$ sends the sum-share $\shi{i}$ to the server (as a plaintext). 

The server waits for at least $N-D$ clients to respond (denoted as $\set{C}_2\subseteq\set{C}_1$). 
Let $\shi{i}^{\ell}$ denote the $\ell$-th coefficient of $\shi{i}$ for $1\leq\ell\leq\lceil L/S \rceil$.
The server computes $\{\vect{z}^{\ell},\perp\}\leftarrow\proc{FastRecon}\left(\{(i,\shi{i}^{\ell})\}_{i\in\set{C}_2}\right)$, for $1\leq\ell\leq\lceil L/S \rceil$. If the reconstruction fails (i.e., outputs $\perp$) for any $\ell$, then the server aborts the current FL iteration and moves to the next one. Otherwise, it outputs $\vect{z} = [\vect{z}^1 \:\: \vect{z}^2 \:\: \cdots \:\: \vect{z}^{\lceil L/S \rceil}]$.  

\section{Analysis}
\label{sec:analysis}

\subsection{Correctness and Security}
\label{sec:security-analysis}
First we state the correctness of \proc{FastSecAgg}, which essentially follows from the linearity and $D$-dropout tolerance of \proc{FastShare}. The proof is given in~\ref{app:correctness-FastSecAgg}.

\begin{theorem}[Correctness]
    \label{thm:correctness-FastSecAgg} 
    Let $\{\vect{u}_i\}_{i\in\set{C}}$ denote the client inputs for \proc{FastSecAgg}.
    If a random set of at most $D$ clients drop out during the execution of \proc{FastSecAgg}, i.e., $|\set{C}_2| \geq N - D$, then the server does not abort and obtains $\vect{z} = \sum_{i\in\set{C}_1}\vect{u}_i$ with probability at least $1 - 1/\textrm{poly}\: N$, where the probability is over the randomness in dropouts.
\end{theorem}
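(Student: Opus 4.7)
The plan is to reduce the correctness of \proc{FastSecAgg} directly to the two structural guarantees that Theorem~\ref{thm:main} gives us about \proc{FastShare}: linearity and random-$D$-dropout resilience. Because the inputs are partitioned into $\lceil L/S\rceil$ blocks and each block is shared independently, it suffices to argue correctness for one block and then union-bound; I will present the one-block argument as the core and handle the union bound at the end.

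First I would fix a block index $\ell\in\{1,\dots,\lceil L/S\rceil\}$ and track the flow of shares for $\vect{u}_i^{\ell}$ through the protocol. By construction, each client $i\in\set{C}_1$ generated $\{(j,\share{\vect{u}_i^{\ell}}{j})\}_{j\in\set{C}_0}\leftarrow\proc{FastShare}(\vect{u}_i^{\ell},\set{C})$ in Round~1 and shipped the encrypted shares. By the correctness of \proc{KA.agree} and the IND-CPA/INT-CTXT correctness of \proc{AE.enc}/\proc{AE.dec}, every surviving client $j\in\set{C}_2\subseteq\set{C}_1$ recovers each ciphertext $\esh{i}{j}$ into the plaintext share $\share{\vect{u}_i^{\ell}}{j}$ for every $i\in\set{C}_1$; the addressing-metadata checks $(i=i')\land(j=j')$ never trigger an abort under honest execution.

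Next I invoke linearity of \proc{FastShare} (Theorem~\ref{thm:main}, item~3). For each surviving $j\in\set{C}_2$, the $\ell$-th coefficient of the sum-share satisfies $\shi{j}^{\ell} = \sum_{i\in\set{C}_1}\share{\vect{u}_i^{\ell}}{j}$, which is precisely the share assigned to client $j$ under a single hypothetical invocation of \proc{FastShare} on the aggregate secret $\vect{U}^{\ell}:=\sum_{i\in\set{C}_1}\vect{u}_i^{\ell}$ (with randomness given by the sum of the per-client random masks). Thus $\{(j,\shi{j}^{\ell})\}_{j\in\set{C}_2}$ is a size-$|\set{C}_2|$ subset of a valid \proc{FastShare} sharing of $\vect{U}^{\ell}$, and the missing shares correspond to $\set{C}\setminus\set{C}_2$, a random subset of size at most $D$ since dropouts are random.

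Now I apply the $D$-dropout-resilience of \proc{FastShare} (Theorem~\ref{thm:main}, item~2): with probability at least $1-1/\mathrm{poly}(N)$, the iterative decoder inside \proc{FastRecon} fills in all missing shares and then the inverse FFT returns $\vect{U}^{\ell}$. Hence \proc{FastRecon} does not output $\perp$ for block $\ell$ and the recovered $\vect{z}^{\ell}$ equals $\sum_{i\in\set{C}_1}\vect{u}_i^{\ell}$. Taking a union bound over the $\lceil L/S\rceil$ blocks (the polynomial exponent in the failure probability of Theorem~\ref{thm:main} can be chosen large enough to absorb this factor, which is at most polynomial in $N$ in any practical regime), all blocks reconstruct simultaneously with probability $1-1/\mathrm{poly}(N)$. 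Concatenating gives $\vect{z}=\sum_{i\in\set{C}_1}\vect{u}_i$ and no abort, as claimed. The only potential obstacle is bookkeeping in the union bound and ensuring the sum of independent \proc{FastShare} random masks is itself uniform (which is immediate over $\GF{q}$); the rest is mechanical.
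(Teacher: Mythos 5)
Your proof follows essentially the same route as the paper's: correctness of the key agreement and authenticated encryption gives that each surviving client holds the correct plaintext shares, linearity of \proc{FastShare} turns the per-client sum-shares into a valid sharing of the aggregate $\sum_{i\in\set{C}_1}\vect{u}_i^{\ell}$, and $D$-dropout resilience lets \proc{FastRecon} recover each block. The one place you diverge is the final step: you close the argument with a union bound over the $\lceil L/S\rceil$ blocks and then have to wave at ``the polynomial exponent can be chosen large enough'' and ``any practical regime.'' The paper avoids the union bound entirely by observing that the reconstruction failure event is \emph{identical} for every block: whether \proc{FastRecon} succeeds depends only on which share indices (i.e., which clients) are missing, and that set is the same for all $\ell$ since a dropped client's sum-share is missing in every coordinate simultaneously. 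So success for $\ell=1$ implies success for all $\ell$, and the single-block probability bound carries over verbatim. This matters beyond aesthetics: $\lceil L/S\rceil = \Theta(L/N)$ is a function of $L$, not $N$, and nothing in the theorem statement bounds it by a polynomial in $N$, so your union bound as written needs an extra hypothesis that the paper's argument does not. The rest of your reasoning (including the remark that the sum of independent uniform masks is uniform, which justifies viewing the summed shares as a genuine \proc{FastShare} sharing of the aggregate) is sound and matches the paper.
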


Next, we show that \proc{FastSecAgg} is secure  against the server colluding with up to $T$ clients in the honest-but-curious setting, irrespective of how and when clients drop out. Specifically, we prove that the joint \textit{view} of the server and any set of clients of bounded size does not reveal any information about the updates of the honest clients, besides what can be inferred from the output of the summation. 

We will consider executions of \proc{FastSecAgg} where \proc{FastShare} has privacy threshold $T$, and the underlying cryptographic primitives are instantiated with security parameter $\lambda$. We denote the server (i.e., the aggregator) as ${A}$, and the set of of $N$ clients as $\set{C}$. Clients may drop out (or, abort) at any point during the execution, and we denote with $\set{C}_i$ the subset of the clients that correctly sent their message to the server in round $i$. 
Therefore, we have $\set{C}\supseteq \set{C}_0 \supseteq \set{C}_1 \supseteq \set{C}_2$. For example, the set $\set{C}_0\setminus\set{C}_1$ are the clients that abort before sending the message to the server in Round 1, but after sending the message in Round 0.
Let $\vect{u}_i\in\mathbb{Z}_R^L$ denote the model update of client $i$ (i.e., $\vect{u}_i$ the $i$-th client's input to the secure aggregation protocol), and for any subset $\set{C}'\subseteq\set{C}$, let $\vect{u}_{\set{C}'} = \{\vect{u}_i\}_{i\in\set{C}'}$.  

In such a protocol execution, the view of a participant consists of their internal state (including their update, encryption keys, and randomness) and all messages they received from other parties. Note that the messages sent by the participant are not included in the view, as they can be determined using the other elements of their view. If a client drops out (or, aborts), it stops receiving messages and the view is not extended past the last message received.

Given any subset $\set{M}\subset\set{C}$, let $\mathsf{REAL}_{\set{M}}^{\set{C},T,\lambda}(\vect{u}_{\set{C}},\set{C}_0,\set{C}_1,\set{C}_2)$ be a random variable representing the combined views of all parties in $\set{M}\cup\{A\}$ in an execution of \proc{FastSecAgg}, where the randomness is over the internal randomness of all parties, and the randomness in the setup phase.
We show that for any such set $\set{M}$ of honest-but-curious clients of size up to $T$, the joint view of $\set{M}\cup\{A\}$ can be simulated given the inputs of the clients in $\set{M}$, and only the sum of the values of the remaining clients. 

\begin{theorem}[Security]
    \label{thm:security-FastSecAgg} 
    There exists a probabilistic polynomial time (PPT) simulator $\mathsf{SIM}$ such that for all $\set{C}$, $\vect{u}_{\set{C}}$, $\set{C}_0$, $\set{C}_1$, $\set{C}_2$, and $\set{M}$ such that $M\subset \set{C}$, $|\set{M}|\leq T$, $\set{C}\supseteq\set{C}_0\supseteq\set{C}_1\supseteq\set{C}_2$, the output of $\mathsf{SIM}$ is computationally indistiguishable from the joint view $\mathsf{REAL}_{\set{M}}^{\set{C},T,\lambda}$ of the server and the corrupted clients $\set{M}$, i.e., 
    \begin{equation}
        \label{eq:real-equal-sim}
        \mathsf{REAL}_{\set{M}}^{\set{C},T,\lambda}(\vect{u}_{\set{C}},\set{C}_0,\set{C}_1,\set{C}_2) \approx \mathsf{SIM}_{\set{M}}^{\set{C},T,\lambda}(\vect{u}_{\set{M}},\vect{z},\set{C}_0,\set{C}_1,\set{C}_2),
    \end{equation} 
    where
    \begin{equation}
        \label{eq:sum-z}
        \vect{z} =
        \begin{cases}
        \sum_{i\in\set{C}_1\setminus\set{M}}\vect{u}_i & \textrm{if}\:\:|\set{C}_1|\geq N-D,\\
        \perp & \textrm{otherwise}.
        \end{cases}
    \end{equation}
\end{theorem}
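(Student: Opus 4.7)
The plan is to construct an explicit PPT simulator $\mathsf{SIM}$ and establish indistinguishability via a sequence of hybrid experiments, each successive pair of hybrids being indistinguishable either computationally (from the IND-CPA security of $\proc{AE}$ together with the security of $\proc{KA}$) or perfectly (from the $T$-privacy and linearity of \proc{FastShare}). The simulator, given $\vect{u}_{\set{M}}$, $\vect{z}$, $\set{C}_0, \set{C}_1, \set{C}_2$, proceeds as follows. In Round~0 it generates key pairs honestly and simulates public-key exchange. In Round~1, for every ciphertext travelling from an honest client $j \notin \set{M}$ to another honest client $i \notin \set{M}$ it encrypts an arbitrary dummy message under the shared key; for every ciphertext from honest $j \notin \set{M}$ to corrupted $i \in \set{M}$ it encrypts a share obtained by running \proc{FastShare} on a dummy input (say $\vect{0}$); corrupted clients' ciphertexts are generated honestly using $\vect{u}_{\set{M}}$. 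In Round~2, if $|\set{C}_1| \geq N-D$, the simulator first computes, for each $i \in \set{M}$, the sum-share $\shi{i} = \sum_{j \in \set{C}_1} \sh{j}{i}$ that the adversary would compute from its view, then samples the remaining sum-shares $\{\shi{i}\}_{i \in \set{C}_2 \setminus \set{M}}$ from the distribution of \proc{FastShare} sharings of the total $\vect{z} + \sum_{i \in \set{M}} \vect{u}_i$ conditioned on the already-fixed shares at positions $\set{M}$.

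The hybrids would be as follows. \emph{Hybrid 0} is the real execution. \emph{Hybrid 1} replaces every ciphertext exchanged between two honest clients by an encryption of a dummy plaintext; indistinguishability follows by a standard reduction to the IND-CPA security of $\proc{AE}$ and the security of $\proc{KA}$, since the shared keys between honest parties are generated from secret keys that never leave the honest clients (cycling through ciphertexts one-at-a-time in sub-hybrids). \emph{Hybrid 2} replaces the plaintexts sent from each honest $j$ to corrupted $i \in \set{M}$ by shares of the dummy input rather than of $\vect{u}_j$; since $|\set{M}| \leq T$, the $T$-privacy of \proc{FastShare} guarantees that the joint distribution of shares received by $\set{M}$ from any honest client $j$ is identical for any choice of $\vect{u}_j$, so this change is perfect. \emph{Hybrid 3} replaces the sum-shares sent in Round~2 by honest clients to the server with the simulated sum-shares described above. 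This is the output of $\mathsf{SIM}$.

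The main obstacle is arguing that Hybrid~3 is distributed identically to Hybrid~2. The key observation combines two facts. First, by the linearity property of \proc{FastShare}, the sum-shares $\{\shi{i}\}_{i \in \set{C}_1}$ in the real protocol form a valid \proc{FastShare} sharing of the total $\sum_{j \in \set{C}_1} \vect{u}_j = \vect{z} + \sum_{i \in \set{M}} \vect{u}_i$. Second, by $T$-privacy, the sub-vector of these sum-shares indexed by $\set{M}$ is distributed independently of the total secret; moreover, since the \proc{FastShare} signal is parameterized by a collection of uniform random masks, one can show that conditioning on the shares at $\set{M}$ and on the secret being a given value uniquely pins down a well-defined conditional distribution on the remaining shares that depends only on this pair, not on the identity of the individual summands. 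Hence sampling the remaining sum-shares as the simulator does yields exactly the same distribution as in Hybrid~2. The edge case $|\set{C}_1| < N-D$ is handled separately: the simulator (given $\vect{z} = \perp$) does not need to produce Round~2 messages consistent with any target sum, because in the real protocol the server would simply fail to reconstruct anyway, and Round~2 messages from honest clients are still simulatable by random \proc{FastShare} sum-shares of an arbitrary total value using the same $T$-privacy argument.
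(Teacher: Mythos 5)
Your proposal is correct and follows essentially the same route as the paper's proof in Appendix E: a hybrid argument that (i) uses the key-agreement simulator and IND-CPA security to replace honest-to-honest ciphertexts with encryptions of dummies, (ii) uses the $T$-privacy of \proc{FastShare} (the paper isolates this as Lemma~\ref{lem:privacy-for-set-of-shares}) to replace the plaintext shares delivered to $\set{M}$, and (iii) uses linearity plus the uniformity of the masks to argue the Round-2 sum-shares are a fresh sharing of the total. The one place you diverge is the final hybrid: the paper sidesteps your conditional-sampling step by replacing each honest $\vect{u}_i$ with a fake input $\vect{v}_i$, where the $\vect{v}_i$ are drawn uniformly subject to $\sum_{i\in\set{C}_1\setminus\set{M}}\vect{v}_i = \vect{z}$; the shares sent to $\set{M}$ and the Round-2 sum-shares are then both derived from these sharings, so consistency with the target sum is automatic by linearity and no conditioning is needed. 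Your variant (dummy shares to $\set{M}$, then sampling the honest sum-shares from the conditional law of a sharing of $\vect{z}+\sum_{j\in\set{M}}\vect{u}_j$ given the coordinates at $\set{M}$) also works, because the sharing map is affine in the uniform masks so the conditional distribution is an efficiently sampleable coset and, by $T$-privacy, its marginal at $\set{M}$ is secret-independent; but this does place an extra proof obligation on you that the paper's choice of $\vect{v}_i$ avoids. The only other quibble is your remark that in the $\vect{z}=\perp$ case the Round-2 messages are simulatable with an arbitrary total: if the server actually received enough sum-shares it would learn the true total, so the correct justification (as in the paper) is that when $|\set{C}_1|<N-D$ the protocol aborts and there are no Round-2 messages to simulate.
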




The security and correctness of \proc{FastSecAgg} critically relies on the guarantees provided by  \proc{FastShare} proved in Theorem~\ref{thm:main}. We present the proofs of the above theorem in Appendices~\ref{app:security-FastSecAgg}.

\subsection{Computation and Communication Costs}
\label{sec:costs}

We assume that the addition and multiplication operations in $\GF{q}$ are $\bigOh{1}$ each. 

\vspace{4pt}
\textbf{Computation Cost at a Client:} $\bigOh{\max\{L,N\}\log N}$. 
Each client's computation cost can be broken as computing shares using \proc{FastShare} which is $\bigOh{\lceil L/S \rceil N\log N} = \bigOh{\max\{L,N\}\log N}$; encrypting and decrypting shares, which is $\bigOh{\lceil L/S \rceil N}$; and adding the received shares, which is $\bigOh{\lceil L/S \rceil N} = \bigOh{\max\{L,N\}}$. 

\vspace{4pt}
\textbf{Communication Cost at a Client:} $\bigOh{\max\{L,N\}}$. 
Each client sends and receives $N-1$ shares (each having $\lceil L/S \rceil$ elements in $\GF{q}$), resulting in $\bigOh{\lceil L/S \rceil N} = \bigOh{\max\{L,N\}}$ communication. In addition, each client sends the sum-share consisting of $\lceil L/S \rceil$ elements in $\GF{q}$. 

\vspace{4pt}
\textbf{Computation Cost at the Server:} $\bigOh{\max\{L,N\}\log N}$.  
The server first recovers missing sum-shares using \proc{FastRecon}, each has $\lceil L/S \rceil$ elements in $\GF{q}$. This results in $\bigOh{\lceil L/S \rceil N\log N}$ complexity. The second step is the inverse FFT on $N$ shares, each has of $\lceil L/S \rceil$ elements in $\GF{q}$, resulting in $\bigOh{\lceil L/S \rceil N\log N}$ complexity.

\vspace{4pt}
\textbf{Communication Cost at the Server:} $\bigOh{N\max\{L,N\}}$.
The server communicates $N$ times of what each client communicates. 

\subsection{Comparison with Prior Works}
\label{sec:comparison}
Bonawitz et al.~\cite{Bonawitz:SecAgg:17} presented the first secure aggregation protocol \proc{SecAgg} for FL, wherein clients use a key exchange protocol to agree on pairwise additive masks to achieve privacy. \proc{SecAgg}, 
in the honest-but-curious setting,  can achieve $T = \alpha N$ for any $\alpha\in(0,1)$ and $D = N-T-1$, and provides worst-case dropout resilience, i.e., it can tolerate any $D$ users dropping out. However, \proc{SecAgg} incurs significant computation cost of $\bigOh{LN^2}$ at the server. This limits its scalability to several hundred clients as observed in \cite{Bonawitz:19:scale}.

Truex et al.~\cite{Truex:19:CCS} uses threshold homomorphic encryption and Xu et al.~\cite{Xu:19:hybridalpha} uses functional encryptionto perform secure aggregation. However, these schemes assume a trusted third party for key distribution, which typically does not hold in the FL setup.

The scheme by So et al.~\cite{So:20:TurboAgg}, which we call \proc{TurboAgg}, uses additive secret sharing for security combined with erasure codes for dropout tolerance. \proc{TurboAgg} allows $T = D = \alpha N$ for any $\alpha\in(0,1/2)$. However, it has two main drawbacks. First, it divides $N$ clients into $N/\log N$ groups, and each client in a group needs to communicate to every client in the next group. This results in per client communication cost of $\bigOh{L\log N}$. Moreover, processing in groups requires at least $\log N$ rounds. Second, it can tolerate only non-adaptive adversaries, i.e., client corruptions happen before the clients are partitioned into groups.
On the other hand, \textsc{FastSecAgg} results in $\bigOh{L}$ communication per client, runs in 3 rounds, and is robust against an adaptive adversary which can corrupt clients during the protocol execution.

Recently, Bell et al.~\cite{Bell:20:SecAggPlus} proposed an improved version of \proc{SecAgg}, which we call \proc{SecAgg+}. Their key idea is to replace the complete communication graph of \proc{SecAgg} by a sparse random graph to reduce the computation and communication costs.  \proc{FastSecAgg} achieves smaller computation cost than \proc{SecAgg+} at the server and the same (orderwise) communication cost per client as \proc{SecAgg+} when $L = \Omega(N)$. Moreover,  \proc{FastSecAgg} is robust against adaptive adversaries, whereas \proc{SecAgg+} can mitigate only non-adaptive adversaries where client corruptions happen before the protocol execution starts. On the other hand, \proc{SecAgg+} achieves smaller communication cost in absolute numbers than \proc{FastSecAgg}.

The performance comparison between \proc{FastSecAgg} and \proc{SecAgg}~\cite{Bonawitz:SecAgg:17}, \proc{TurboAgg}~\cite{So:20:TurboAgg}, and \proc{SecAgg+}~\cite{Bell:20:SecAggPlus} is summarized in Table~\ref{tbl:comparison}.

There are several recent works~\cite{Pillutla:19:Byzantine,Jaggi:20:Byzantine,So:20:Byzantine} which consider secure aggregation when a fraction of clients are Byzantine. Our focus, on the other hand, is on honest-but-curious setting, and we leave the case of Byzantine clients as a future work.

\section*{Acknowledgements}

Swanand Kadhe was introduced to the secure aggregation problem in the Decentralized Security course by Raluca Ada Popa in Fall 2019. He would like to thank Raluca for an excellent course. Swanand would like to thank Mukul Kulkarni for immensely helpful discussions and for providing feedback on drafts of this paper. This work is supported in part by the National Science Foundation grant CNS-1748692.



\bibliographystyle{ACM-Reference-Format}
\bibliography{Bib_FastSecAgg}



\appendix

\section{Finite Field Fourier Transform}
\label{app:finite-field-DFT}
Here, we review the basics of the finite field Fourier transform, for details, see, e.g.,~\cite{Pollard:71}. Let $q$ be a power of a prime, and consider a finite field $\GF{q}$ of order $q$. Let $N$ be a positive integer such that $N$ divides $(q-1)$ and $\omega$ be a primitive $N$-th root of unity in $\GF{q}$. 
The discrete Fourier transform (DFT) of length $N$ generated by $\omega$ is a mapping from $\GF{q}^N$ to $\GF{q}^N$ defined as follows. 
Let $x = [x_0\: x_1\: \dots x_{N-1}]$ be a vector over $\GF{q}$. Then, the DFT of $x$ generated by $\omega$, denoted as $DFT_{\omega}(x)$, is the vector over $\GF{q}$, $X = [X_0\: X_1\: \dots X_{N-1}]$, given by
\begin{equation}
    \label{eq:DFT}
    X_j = \sum_{i = 0}^{N - 1} \omega^{ij} x_i, \quad j = 0, 1, \dots, N-1.
\end{equation}
The inverse DFT (IDFT), denoted as $IDFT_{\omega}(X)$, is given by
\begin{equation}
    \label{eq:IDFT}
    x_i = \frac{1}{N}\sum_{j = 0}^{N - 1} \omega^{-ij} X_j, \quad i = 0, 1, \dots, N-1,
\end{equation}
where $\frac{1}{N}$ denotes the reciprocal of the sum of $N$ ones in the field $\GF{q}$.
We refer to $x$ as the ``time-domain signal'' and index $i$ as time, and $X$ as the ``frequency-domain signal'' or the ``spectrum'' and $j$ as frequency. 

If a signal is subsampled in the time-domain, its frequency components mix together, i.e., {\it alias}, in a pattern that depends on the sampling procedure. In particular, let $n$ be a positive integer that divides $N$. Let $\dsample{x}{n}$ denote the subsampled version of $x$ with period $n$, i.e., $\dsample{x}{n} = \{x_{ni} : 0\leq i\leq N/n-1\}$. Then, $(N/n)$-length DFT of $\dsample{x}{n}$, $\dsample{X}{n}$, (generated by $\omega^n$) is related to the $N$-length DFT of $x$, $X$, (generated by $\omega$) as
\begin{equation}
  \dsample{X}{n}_{j} = \frac{1}{n}\sum_{\substack{i=0\\ i\mod (N/n)=j}}^{N-1} X_i, \quad j = 0, 1, \dots, \frac{N}{n}-1,
  \label{eq:aliasing}
\end{equation}
where $1/n$ is the reciprocal of the sum of $n$ ones in $\GF{q}$.

A circular shift in the time-domain results in a phase shift in the frequency-domain. Given a signal $x$, consider its circularly shifted version $\shift{x}{1}$ defined  as $\shift{x}{1}_i = x_{(i+1)\mod N}$. Then, the DFTs of $\shift{x}{1}$ and $x$ (both generated by $\omega$) are related as $\shift{X}{1}_j = \omega^{-j} X_j$. In general, a
circular shift of $t$ results in 
\begin{equation}
  \shift{X}{t}_j = \omega^{-tj} X_j, \quad j = 0, 1, \dots, N-1.
  \label{eq:shift}
\end{equation}

\section{Proof of Lemma}
\label{app:proof-lemma}
Proof of Lemma~\ref{lem:parity-checks-in-DFT}.
For $i\in\{0,1\}$, for $v\in\{0,1,\ldots,\delta_i n_i-1\}$, let us denote $\vect{x}^{(v)(\downarrow n_i)}$ as $\vect{x}$ circularly shifted (in advance) by $v$ and then subsampled by $n_i$. That is, $\vect{x}^{(v)(\downarrow n_i)}_j = \vect{x}_{(jn_i+v)\mod n_i}$ for $j=0,1,\ldots,N/n_i-1$. Let $\vect{X}^{(v)(\downarrow n_i)}$ denote the DFT of $\vect{x}^{(v)(\downarrow n_i)}$ generated by $\omega^{n_i}$. Now, from the aliasing property~\eqref{eq:aliasing}, it holds that
\begin{equation}
    \label{eq:RS-parity-1}
    \vect{X}^{(v)(\downarrow n_i)}_c = \frac{1}{n_i}\sum_{\substack{j=0\\ j \mod (N/n_i) = c}}^{N-1}\vect{X}_j^{(v)}, 
\end{equation}
for $v = 0,1,\ldots,\delta_i n_i-1$ and $c = 0,1,\ldots,N/n_i-1.$
However, from the circular shift property~\eqref{eq:shift}, we have $\vect{X}_j^{(v)} = \omega^{-vj}\vect{X}_j$ for $j=0,1,\ldots,N-1$. Thus, we have
\begin{equation}
    \label{eq:RS-parity-2}
    \vect{X}^{(v)(\downarrow n_i)}_c = \frac{1}{n_i}\sum_{\substack{j=0\\ j \mod (N/n_i) = c}}^{N-1}\omega^{-vj}\vect{X}_j,
\end{equation}
for  $v = 0,1,\ldots,\delta_i n_i-1$, and $c=0,1,\ldots,N/n_i-1$.
Note that, by construction, $\vect{x}^{(v)(\downarrow n_i)}$ is a length-$(N/n_i)$ zero vector for $v=0,1,\ldots,\delta_i n_i-1$ for each $i\in\{0,1\}$. Therefore, $X^{(v)(\downarrow n_i)}$ is also a length-$(N/n_i)$ zero vector for $v=0,1,\ldots,\delta_i n_i-1$ for each $i\in\{0,1\}$. Hence, from~\eqref{eq:RS-parity-2}, we get
\begin{equation}
    \label{eq:RS-parity-3}
     \sum_{\substack{j=0\\ j \mod (N/n_i) = c}}^{N-1}\omega^{-vj}X_j = \sum_{u=0}^{n_i-1} \omega^{-v\left(u\frac{N}{n_i}+c\right)}X_{u\frac{N}{n_i}+c}= 0, 
\end{equation}
for $v = 0,1,\ldots,\delta_i n_i-1$, and $c=0,1,\ldots,N/n_i-1$.
Simplifying the above, we get 
\begin{equation}
    \label{eq:RS-parity-4}
    \sum_{u=0}^{n_i-1} \left(\omega^{-uv\frac{N}{n_i}}\right) X_{u\frac{N}{n_i}+c} = 0,
\end{equation}
for $u = 0,1,\ldots,n_i-1$, $v = 0,1,\ldots,\delta_i n_i-1$, and $c = 0,1,,\ldots,N/n_i-1$.


\section{Analysis of \proc{FastShare}}
\label{app:FastShare-analysis}
We first focus on the correctness, i.e.,  dropout tolerance. To prove that \proc{FastShare} has a dropout tolerance of $D$, we need to show that \proc{FastRecon} can recover the secrets from a random subset of $N-D$ shares. This is equivalent to showing that the iterative peeling decoder of \proc{FastRecon} can recover all the shares from a random subset of $N-D$ shares. Note that the shares generated by \proc{FastShare} form a codeword of a product code. From the \textit{density evolution} analysis of the iterative peeling decoder of product codes in \cite{Justesen:11,Pawar:18}, it follows that, when $D \leq (1-(1-\delta_0)(1-\delta_1))\frac{N}{2}$, the decoder recovers all the shares from a random subset of $N-D$ shares with probability at least $1 - \textrm{poly}N$. Therefore,  \proc{FastShare} has the dropout tolerance of $D = (1-(1-\delta_0)(1-\delta_1))\frac{N}{2}$.

To prove the privacy threshold of $T$, we consider the information-theoretic equivalent of the security definition~\cite{Blundo:94}. In particular, we need to show the following: for any $\set{P} \subset {C}$ such that $|\set{P}| \leq T$, it holds that $\Hcond{\vect{s}}{\{\share{\vect{s}}{i}\}_{i\in\set{P}}} = \Hp{\vect{s}}$, where $H$ denotes the Shannon entropy. For simplicity, let us denote $\share{\vect{s}}{i} = \vect{X}_i$ for all $i$. In the remainder of the proof, we denote random variables by boldface letters.


First, we show that the information-theoretic security condition is equivalent to a specific linear algebraic condition. To this end, we first observe that the vector of $N$ shares can be written as,
\begin{equation}
    \label{eq:generator}
    \vect{X} = {G}
    \begin{bmatrix}
    \vect{s}\\ \vect{m}
    \end{bmatrix},
\end{equation}
where $\vect{s}\in\GF{q}^{\ell}$ is the vector of secrets, $\vect{m}\in\GF{q}^{K-\ell}$ is a vector with each element chosen independently and uniformly from $\GF{q}$, and $G$ is a particular submatrix of the DFT matrix whose formal definition we defer later on. We then show that information theoretic security is equivalent to a particular linear algebraic condition on submatrices of $G$. To prove this condition we leverage the fact that $G$ is derived from a DFT matrix and consider an alternate representation based on the Chinese Remainder theorem (CRT). We furnish more details while formally discussing the lemmas.

Recall that $n_0$ and $n_1$ are co-prime. By the CRT, we may conclude that any number $j \in \{ 0,\cdots,N-1 \}$ can be uniquely represented in a 2D-grid as a tuple $(a,b)$ where $a = j \mod n_0$ and $b = j \mod n_1$.

First, define the set of indices
\begin{align}
    \set{S} = \{ &(p_0,p_1) : \nonumber \\
    &\delta_0 n_0 + \alpha (1 - \delta_0) n_0 \le p_0 \le n_0 - 1, \nonumber \\
    &\delta_1 n_1 + \beta (1-\delta_1)n_1 +1 \le p_1 \le n_1 - \beta n_1 (1-\delta_1) \}. \label{def:S}
\end{align}
Similarly, define
\begin{align}
    \set{Z}_0 &= \{ (p_0,p_1) : 0 \le p_0 \le \delta_0 n_0 - 1,\ 0 \le p_1 \le n_1 - 1 \}, \label{def:Z0}\\
    \set{Z}_1 &= \{ (p_0,p_1) : 0 \le p_0 \le n_0 - 1,\ 0 \le p_1 \le \delta_1 n_1 - 1 \}. \label{def:Z1}
\end{align}
For a pictorial representation of these indices, refer to Fig.~\ref{fig:2D-secrets}. These sets correspond to points in the grid, and by the CRT map back to integers in the range $\{ 0,\cdots,N-1\}$.

\begin{remark} \label{remark:1}
In \Cref{thm:main}, the number of secrets $\ell$, is chosen to be equal to $(1-\alpha)(1-2\beta)(1-\delta_0)(1-\delta_1)n_0n_1$. By design this coincides with the size of $\set{S}$.
\end{remark}

With these definition, we next describe the construction of the matrix $G$ is obtained by starting with the $N \times N$ DFT matrix, removing the columns corresponding to $\set{Z}_0 \cup \set{Z}_1$, and permuting the remaining columns so that the columns corresponding to $\set{S}$ are ordered as the first $\ell$ columns in $\set{S}$ (in arbitrary sequence).

Having defined the matrix $G$, in the following lemma we show that information theoretic security can be guaranteed by a particular rank condition satisfied by submatrices of $G$. In particular,  In addition, define $K = N - |\set{Z}_0\cup\set{Z}_1|$.

The proof is similar to Lemma 6 in \cite{Silva:11}, and thus omitted.



\begin{lemma}
    \label{lem:security}
    Let $\rand{s}\in\GF{q}^{\ell}$, $\rand{m}\in\GF{q}^{k-\ell}$, and $\rand{X} = G[\rand{s}\:\: \rand{m}]^T$ be random variables representing the secrets, random masks, and shares, respectively. Let $\rand{X}_{\set{P}}$ be an arbitrary set of shares corresponding to the indices in $\set{P}\subset\{1,2,\ldots,N\}$. Let $G_{\set{P}}$ denote the sub-matrix of $G$ corresponding to the rows indexed by $\set{P}$. Let $G_{\set{P}} = [G_1\:\: G_2]$, where $G_1$ consists of the first $\ell$ columns of $G_{\set{P}}$ and $G_2$ consists of the last $K-\ell$ columns of $G_{\set{P}}$. If $\rand{m}$ is uniform over $\GF{q}^{K-\ell}$, then it holds that
    \begin{equation}
    \label{eq:security-lemma}
    \Hp{\vect{s}} - \Hcond{\vect{s}}{\rand{X}_{\set{P}}} \leq \textrm{rank}(G_{\set{P}}) - \textrm{rank}(G_2).
    \end{equation}
\end{lemma}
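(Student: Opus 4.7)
The plan is to recognize that $H(\vect{s}) - H(\vect{s} \mid \rand{X}_{\set{P}}) = I(\vect{s}; \rand{X}_{\set{P}})$ and bound this mutual information by exploiting the linear structure $\rand{X}_{\set{P}} = G_1\rand{s} + G_2\rand{m}$, where $\rand{m}$ is uniform and independent of $\rand{s}$. Using the identity $I(\vect{s}; \rand{X}_{\set{P}}) = H(\rand{X}_{\set{P}}) - H(\rand{X}_{\set{P}} \mid \vect{s})$ (with entropies measured in $q$-ary units so that uniform distributions on $\GF{q}^r$ have entropy $r$), the task reduces to bounding the first term from above and evaluating the second term exactly.

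For the upper bound on $H(\rand{X}_{\set{P}})$, I would observe that $\rand{X}_{\set{P}}$ takes values in the column space of $G_{\set{P}}$, which is a $\GF{q}$-subspace of dimension $\operatorname{rank}(G_{\set{P}})$, so $H(\rand{X}_{\set{P}}) \le \operatorname{rank}(G_{\set{P}})$ (with equality iff $\rand{X}_{\set{P}}$ is uniform on that subspace). For the conditional term, fixing $\vect{s} = s_0$ I would note that $\rand{X}_{\set{P}} \mid (\vect{s} = s_0) = G_1 s_0 + G_2 \rand{m}$ is the uniform distribution shifted by the constant $G_1 s_0$ over the image of the linear map $\rand{m} \mapsto G_2 \rand{m}$, since $\rand{m}$ is uniform on $\GF{q}^{K-\ell}$ and independent of $\vect{s}$. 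Hence $H(\rand{X}_{\set{P}} \mid \vect{s} = s_0) = \operatorname{rank}(G_2)$ for every $s_0$, and averaging gives $H(\rand{X}_{\set{P}} \mid \vect{s}) = \operatorname{rank}(G_2)$.

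Combining the two estimates yields
\begin{equation*}
H(\vect{s}) - H(\vect{s} \mid \rand{X}_{\set{P}}) = I(\vect{s}; \rand{X}_{\set{P}}) \le \operatorname{rank}(G_{\set{P}}) - \operatorname{rank}(G_2),
\end{equation*}
which is exactly the claimed bound. There is no real obstacle here: the argument is a direct translation of the intuition that the ``useful'' information about $\vect{s}$ in $\rand{X}_{\set{P}}$ is limited by how many independent linear combinations in $G_{\set{P}}$ are not already absorbed by the random mask component $G_2 \rand{m}$. The only minor care-point is the convention on entropy units; if one prefers natural $\log$, the same derivation goes through with an extra factor of $\log q$ on both sides, and the stated inequality is recovered after cancellation. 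I would also remark that equality holds when $\rand{X}_{\set{P}}$ is uniform on the column space of $G_{\set{P}}$, so the bound is tight and hence sufficient to drive the later privacy argument (namely, guaranteeing $\operatorname{rank}(G_{\set{P}}) = \operatorname{rank}(G_2)$ for $|\set{P}| \le T$ will imply full $T$-privacy).
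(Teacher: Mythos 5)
Your proof is correct and follows essentially the same route as the argument the paper defers to (Lemma~6 of~\cite{Silva:11}): writing $\Hp{\vect{s}} - \Hcond{\vect{s}}{\rand{X}_{\set{P}}} = \Hp{\rand{X}_{\set{P}}} - \Hcond{\rand{X}_{\set{P}}}{\vect{s}}$, bounding the first term by the support-size bound $\Hp{\rand{X}_{\set{P}}} \leq \rank(G_{\set{P}})$, and evaluating the second term exactly as $\rank(G_2)$ using the independence of $\rand{m}$ from $\rand{s}$ and the fact that $G_2\rand{m}$ is uniform over the column space of $G_2$. No gaps; your care over $q$-ary entropy units and the tightness remark are both consistent with how the lemma is used downstream.
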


Now, to prove the information-theoretic security, it suffices to prove that for any $\set{P}$ of size at most $T = \alpha\beta(1-\delta_0)(1-\delta_1)N$, $\textrm{rank} (G_2) = \textrm{rank} (G_{\set{P}})$. We prove this by showing that, for any $\set{D}$ with $|\set{P}| \leq T$, the columns of $G_1$ lie in the span of the columns of $G_2$. Note that columns of $G_1$ are the columns of the DFT matrix indexed by $\set{S}$.



\textbf{Proof of $\textrm{rank}(G_{\set{P}}) = \textrm{rank}(G_2)$}:
Assume that $\{ \omega_\partial : \partial \in \set{P} \}$ are the set of primitive elements generating the rows of $G_{\set{P}}$. For the rest of the proof we are guided by the grid representation of \cref{fig:2D-secrets} - the Chinese remainder theorem (CRT) furnishes a representation of the columns of $G_{\set{P}}$, indexed by $\{0,\cdots,n-1\}$ as points on a 2D-grid, $\{ (p_0,p_1) : p_0 \in \{ 0,\cdots,n_0-1\}, p_1 \in \{ 0,\cdots,n_1-1\} \}$.

\textit{Notation:} In the grid representation, consider the set of points $\set{T}$ in \cref{fig:2D-secrets}. Formally,
\begin{align}
    \set{T} = \{ (p_0,p_1) : \ &\delta_0 n_0 \le \ p_0 \ \le \delta_0 n_0 + \alpha (1 - \delta_0) n_0 - 1, \nonumber\\
    &n_1 - \beta (1 - \delta_1) n_1 \le \ p_1 \ \le n_1-1 \}.
\end{align}
Define $G_2 (\set{T})$ as the matrix $G_2$ only populated by the columns whose indices belong to $\set{T}$. Henceforth, we use the terminology ``points'' to refer to a column of $G_\set{P}$ in its representation as a tuple in the 2D-grid. Moreover we use the terminology ``span'' and ``rank'' of the points to indicate the span and rank of the corresponding columns. Consider a column index $p \in \{ 0,\cdots,n-1\}$ and $\Delta \in \{ 0,\cdots,n-1\}$ where $p \mapsto (p_0,p_1)$ and $\Delta \mapsto (\Delta_0,\Delta_1)$ under the CRT bijection. Then, the notation $(p_0,p_1)+(\Delta_0,\Delta_1)$ returns the point $( (p_0 + \Delta_0) \mod n_0, (p_1+\Delta_1) \mod n_1)$.

Since $G_\set{P}$ is derived from a DFT matrix, this in fact corresponds to multiplying the column corresponding to $p$ by the matrix $\text{diag} ((\omega_\partial^\Delta : \partial \in \set{P}))$. The notations (i) $p+\Delta$, (ii) $p+(\Delta_0,\Delta_1)$ are defined analogously. We also use the terminology ``shift $p$ horizontally by $\Delta_0$'' and ``shift $p$ vertically by $\Delta_1$'' to respectively denote $p+(\Delta_0,0)$ and $p+(0,\Delta_1)$. Given a set of points $\set{P}$, we overload notation and use $\set{P} + \Delta$ as the set of points $\{ p + \Delta : p \in \set{P} \}$.

Observe that $\set{P} + \Delta$ corresponds to multiplying the columns indexed by $\set{P}$ by a full-rank matrix. Then, using the structure of $G_\set{P}$ inherited from the DFT (Vandermonde) matrix structure, we get the following result immediately.
\begin{lemma} \label{remark:2}
If $p \in \text{span} (\set{P})$ for some set of points $\set{P}$, then for any $\Delta \in \{ 0,\cdots,n-1 \}$, $p+\Delta \in \text{span} (\set{P} + \Delta)$.
\end{lemma}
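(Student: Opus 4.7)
The plan is to exploit the Vandermonde (DFT) structure of $G_{\set{P}}$ and observe that the operation $q\mapsto q+\Delta$ on column indices is implemented by left-multiplication by an invertible diagonal matrix, so it preserves all linear dependence relations among columns.

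First, I would write out the columns explicitly. Since $G_{\set{P}}$ is obtained from a DFT matrix by restricting to the rows indexed by $\set{P}$, the column $v_q$ corresponding to index $q$ has entries $(\omega_\partial^{q} : \partial \in \set{P})$. Consequently, for any $\Delta \in \{0,\dots,n-1\}$,
\begin{equation*}
v_{q+\Delta} \;=\; (\omega_\partial^{q+\Delta} : \partial \in \set{P}) \;=\; D_\Delta\, v_q,
\end{equation*}
where $D_\Delta := \mathrm{diag}(\omega_\partial^{\Delta} : \partial \in \set{P})$. Since each $\omega_\partial$ is a primitive root of unity, every diagonal entry of $D_\Delta$ is nonzero in $\GF{q}$, so $D_\Delta$ is invertible. (The exponents $q+\Delta$ are interpreted modulo $n$, which is consistent with the definition of $p+\Delta$ inherited from the CRT identification.)

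Second, I would translate the hypothesis into a linear combination. The assumption $p \in \mathrm{span}(\set{P})$ means there exist coefficients $\{c_q\}_{q\in\set{P}} \subset \GF{q}$ with $v_p = \sum_{q\in\set{P}} c_q\, v_q$. Applying $D_\Delta$ to both sides and using linearity together with the identity $D_\Delta v_q = v_{q+\Delta}$ derived above gives
\begin{equation*}
v_{p+\Delta} \;=\; D_\Delta v_p \;=\; \sum_{q\in\set{P}} c_q\, D_\Delta v_q \;=\; \sum_{q\in\set{P}} c_q\, v_{q+\Delta},
\end{equation*}
which is exactly the statement $p+\Delta \in \mathrm{span}(\set{P}+\Delta)$.

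There is really no serious obstacle here; the only conceptual point is making sure the shift operation on \emph{indices} (defined via CRT and reduction mod $n$) agrees with the shift operation on \emph{columns} (multiplication by $D_\Delta$), which follows directly from $\omega_\partial$ being an $n$-th root of unity. Thus the lemma is essentially a restatement of the fact that multiplying a set of vectors by a common invertible operator preserves their span structure, specialized to the Vandermonde columns arising from the DFT.
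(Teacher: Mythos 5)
Your proof is correct and follows essentially the same route as the paper's: the paper also multiplies each coordinate identity $\omega_\partial^p=\sum_{q\in\set{P}}\alpha_q\omega_\partial^q$ by $\omega_\partial^\Delta$, which is exactly your diagonal-matrix $D_\Delta$ argument phrased entrywise. No issues.
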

\begin{proof}
If $p \in \text{span} (\set{P})$ this implies that there exists weights $\{ \alpha_q : q \in \set{P}\}$ such that for each $\partial \in \set{P}$, $\omega_\partial^p = \sum\nolimits_{q \in \set{P}} \alpha_q \omega_\partial^q$. Fixing any $\partial \in \set{P}$, and multiplying both sides by $\omega_\partial^\Delta$,
\begin{equation}
    \omega_\partial^{p + \Delta} = \sum\nolimits_{q \in \set{P}} \alpha_q \omega_\partial^{q + \Delta} = \sum\nolimits_{q \in \set{P}+\Delta} \alpha_q' \omega_\partial^q.
\end{equation}
Since this is true for each $\partial \in \set{P}$ the proof follows immediately.
\end{proof}
\begin{figure}
    \centering
    \includegraphics[width=0.37\textwidth]{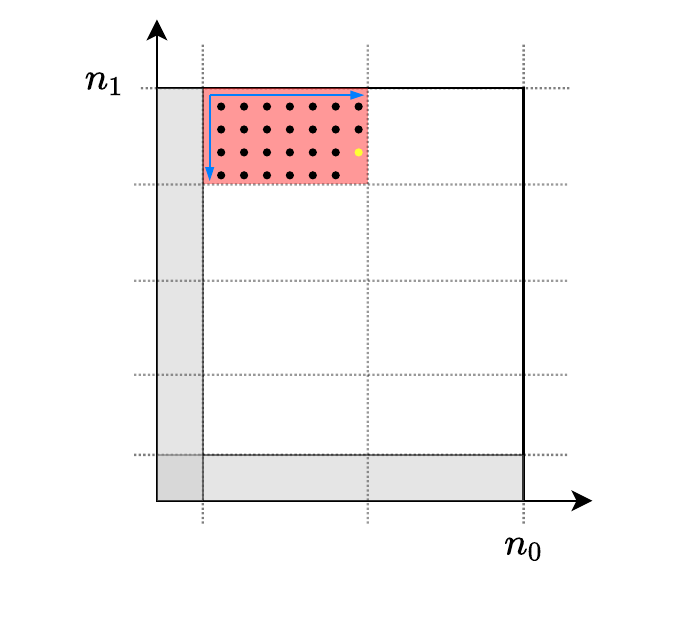}
    \caption{Enumerating the columns in the bottom-right sequence until the first column (yellow) is reached which lies in the span of the previous columns}
    \label{fig:2}
\end{figure}
By construction of the set, observe that the number of points in $\set{T}$ equals $\alpha \beta (1 - \delta_0) (1 - \delta_1) n_0 n_1$. Moreover, recalling the choice of the privacy threshold $T$ in \Cref{thm:main}, observe that $|\set{T}| = T$. By this fact, we have that $\rank (G_{\set{P}}) \leq T$. This implies two possibilities:
\begin{description}
    \item[Case I.] If $\rank (G_2 (\set{T})) = \rank (G_{\set{P}})$, then the remaining columns in $G_1$ must lie in the span of $G_2 (\set{T})$, and the proof concludes.
    \item[Case II.] If not, then the columns of $G_2(\set{T})$ must have at least one column dependent on the others.
\end{description}

We can identify one such column by the following procedure: starting at the top left corner of $\set{T}$ in the grid, sequentially collect the points in $\set{T}$ in a top-to-bottom, left-to-right sequence. Stop at the first point $y = (y_0,y_1)$ which lies in the span of the previously collected points (by assumption, such a point must exist as we are in Case II). \Cref{fig:2} illustrates this procedure where we collect the black points sequentially until the yellow point is reached which is the first point that lies in the span of the previously collected points. Let $\set{B}$ be defined as the set of black points which are enumerated until $y$ is found. By definition, this implies that there exist (not necessarily unique) weights $\{ \alpha_b \}_{b \in \set{B}}$ such that,
\begin{equation}
    \omega^y = \sum\nolimits_{b \in \set{B}} \alpha_b \omega^b
\end{equation}
This implies that $y - (0,1) \in \text{span} (\set{B} - (0,1))$. Define $\set{Y}$ as the set of points,
\begin{equation}
    \set{Y} = \{ (y_0,\theta) :  y_1 - n_1(1-\delta_1)(1-\beta)+1 \le \theta \le y_1 \}.
\end{equation}
Note that the set of points $\set{Y}$ implicitly depends on the location of point $y$. Next, we define $\set{L}$ and $\set{C}$  to be the set of points,
\begin{alignat}{3}
    \set{L} &=&& \cup_{k=0}^{n_0-y_0-1} (\set{Y} + (k,0)) \label{def:L}\\
    \set{C} &=&& \{ (p_0,p_1) : \nonumber\\
    & &&n_0\delta_0 \le p_0 \le y_0,  n_1 \delta_1 \le p_1 \le n_1 - 1 \} \mathbin{\big\backslash} \set{Y} \label{def:C}.
\end{alignat}
In addition, we define $\set{Q}$ to be the set of points,
\begin{alignat}{3} \label{def:Q}
    \set{Q} &=&& \{ (p_0,p_1) : \nonumber\\
    & &&n_0\delta_0 \le p_0 \le n_0-1,  n_1 \delta_1 \le p_1 \le n_1 - 1 \} \mathbin{\big\backslash} \set{L}.
\end{alignat}
Pictorially these points are represented in Figs~\ref{fig:4} and~\ref{fig:5}.
\begin{claim} \label{claim:1}
$\set{Y} \subseteq \text{span} (\set{C})$.
\end{claim}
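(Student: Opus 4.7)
The plan is to establish $\set{Y} \subseteq \text{span}(\set{C})$ by combining the shift invariance of \Cref{remark:2} with a downward induction on the column offset. Parametrize the points of $\set{Y}$ as $(y_0, y_1 - k)$ for $k \in \{0, 1, \ldots, n_1(1-\delta_1)(1-\beta)-1\}$, so that $k = n_1(1-\delta_1)(1-\beta)-1$ is the leftmost point $y_{\text{left}} := (y_0, y_1 - n_1(1-\delta_1)(1-\beta) + 1)$ and $k = 0$ is $y$ itself. The argument will proceed by downward induction on $k$, establishing $(y_0, y_1-k) \in \text{span}(\set{C})$ while sweeping from the left end of $\set{Y}$ back toward $y$.

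First, applying \Cref{remark:2} to $y \in \text{span}(\set{B})$ with the shift $(0,-k)$ yields $(y_0, y_1 - k) \in \text{span}(\set{B} - (0,k))$ for every admissible $k$. Split $\set{B} = \set{B}^{<} \cup \set{B}^{=}$ according to whether a point lies on a row strictly above $y_0$ or exactly on row $y_0$; under the row-major reading of the ``top-to-bottom, left-to-right'' enumeration, $\set{B}^{=}$ has columns in $[n_1 - \beta(1-\delta_1)n_1,\, y_1-1]$, whereas $\set{B}^{<}$ spans all columns $[n_1 - \beta(1-\delta_1)n_1,\, n_1-1]$ across rows $[n_0\delta_0,\, y_0-1]$. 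A direct arithmetic check then shows $\set{B}^{<} - (0,k) \subseteq \set{C}$ for every $k$ in the stated range, since the shifted columns sit in $[n_1 - \beta(1-\delta_1)n_1 - k,\, n_1 - 1 - k] \subseteq [n_1\delta_1 + 1,\, n_1-1]$, using the identity $n_1 - \beta(1-\delta_1)n_1 - (n_1(1-\delta_1)(1-\beta)-1) = n_1\delta_1 + 1$.

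For the base case $k = n_1(1-\delta_1)(1-\beta) - 1$, the columns of $\set{B}^{=} - (0,k)$ lie in $[n_1\delta_1+1,\, y_{\text{left}} - 1]$, strictly to the left of $\set{Y}$'s column window and inside $\set{C}$'s column range, so $\set{B}^{=} - (0,k) \subseteq \set{C}$ and hence $y_{\text{left}} \in \text{span}(\set{C})$. For the inductive step, assume the claim holds for all $k' \in \{k+1, \ldots, n_1(1-\delta_1)(1-\beta)-1\}$. Every column $c$ of $\set{B}^{=} - (0,k)$ satisfies $c \le y_1 - 1 - k$; those with $c < y_{\text{left}}$ are already in $\set{C}$, while those with $c \ge y_{\text{left}}$ take the form $(y_0, y_1 - k')$ for some $k' > k$ still in the admissible range (the constraint $\beta < 1/2$ is what keeps $k'$ bounded), and therefore lie in $\text{span}(\set{C})$ by the inductive hypothesis. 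Combining with $\set{B}^{<} - (0,k) \subseteq \set{C}$ gives $(y_0, y_1 - k) \in \text{span}(\set{C})$, closing the induction and establishing the claim.

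The main subtlety will be the bookkeeping: the enumeration ordering on $\set{T}$ must place the row-$y_0$ elements of $\set{B}$ strictly to the left of $y$ (so a row-major reading of ``top-to-bottom, left-to-right'' is essential), and the shift arithmetic has to be verified carefully---in particular, the shifted column window must stay inside $[n_1\delta_1+1, n_1-1]$ and the inductive-step index $k'$ must remain within $[k+1, n_1(1-\delta_1)(1-\beta)-1]$, both of which are exactly what the parameter constraint $\beta \in (0,1/2)$ delivers.
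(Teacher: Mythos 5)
Your argument is a legitimate mirror image of the paper's proof, but the two are not interchangeable: they rest on opposite conventions for the enumeration order of $\set{T}$, and only one of those conventions is consistent with the paper's own reasoning. The paper inducts \emph{outward} from $y$ (shifts $k=0,1,2,\ldots$), and its key containment $\set{B}-(0,k)\subseteq \set{C}\cup\{y-(0,t):0\le t\le k-1\}$ holds only if the points of $\set{B}$ sharing the coordinate $y_0$ with $y$ have second coordinate \emph{greater} than $y_1$, so that after shifting by $k$ they land either above $\set{Y}$ (hence in $\set{C}$) or among the already-handled points $y-(0,t)$, $t<k$; this is the convention depicted in Fig.~\ref{fig:6}. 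You instead induct \emph{inward} from the far end of $\set{Y}$, which requires those same points of $\set{B}$ to have second coordinate \emph{less} than $y_1$ --- your ``row-major'' reading. Under that reading your bookkeeping is correct: the identity $n_1-\beta(1-\delta_1)n_1-(n_1(1-\delta_1)(1-\beta)-1)=n_1\delta_1+1$ does place $\set{B}^{<}-(0,k)$ for every admissible $k$, and the base-case set $\set{B}^{=}-(0,k_{\max})$, inside $\set{C}$ without wraparound into $\set{Z}_1$, and the leftover points of $\set{B}^{=}-(0,k)$ are exactly the already-handled $y-(0,k')$ with $k'>k$. But under the paper's convention your base case breaks: $\set{B}^{=}$ would occupy second coordinates $[y_1+1,\,n_1-1]$, and shifting by $k_{\max}$ drops it into $[y_1-k_{\max}+1,\; n_1-1-k_{\max}]$, which (using $\beta<1/2$) sits inside $\set{Y}$ --- points that are neither in $\set{C}$ nor yet established to lie in its span, so the claimed containment $\set{B}^{=}-(0,k_{\max})\subseteq\set{C}$ is false. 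Since the enumeration order is a free design choice made before Case~II, and nothing in Claims~\ref{claim:2} and~\ref{claim:3} depends on which convention is adopted, your version is a self-consistent alternative that plugs into the rest of the argument; you should, however, state your enumeration as a definition you are imposing rather than as a reading of the paper's (genuinely ambiguous) prose, because the paper's figure and its own inductive step commit to the other one.
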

\begin{proof}
We prove this result by induction sequentially iterating over all the points from top to bottom in $\set{Y}$. Clearly $\set{B} \subseteq \set{C}$, therefore the induction holds for the first point in $\set{Y}$, which is $y = (y_0,y_1)$.

Suppose for some $k \ge 1$ that for each $t \le k-1$ the point $y-(0,t)$ lies in $\text{span} (\set{C})$. Then we show that $y- (0,k) \in \text{span} (\set{C})$ as long as $k \ge y_1 - n_1(1-\delta_1)(1-\beta)+1$. Indeed, observe first that,
\begin{equation} \label{eq:1203102}
    y - (0,k) \in \text{span} (\set{B} - (0,k)).
\end{equation}
The key observation is that $\set{B} - (0,k)$ is always $\subseteq \set{C} \cup \{ y - (0,t) : 0 \le t \le k-1 \}$ (unless $k > y_1 - n_1(1-\delta_1)(1-\beta)+1$, in which case $\set{B} - (0,k)$ shifts far enough down that it includes points in $\set{Z}_1$ - this has null intersection with $\set{C}$ and $\set{Y}$ so the assertion is clearly false). For a pictorial presentation of this fact, refer to Fig.~\ref{fig:6}. By the induction hypothesis, for each $t \le k-1$, $y - (0,t) \in \text{span} (\set{C})$. Therefore, $\text{span} (\set{B} - (0,k)) \subseteq \text{span} (\set{C})$. Plugging into \cref{eq:1203102} completes the proof of the claim.
\end{proof}

\begin{claim} \label{claim:2}
$\set{L} \subseteq \text{span} (\set{Q})$.
\end{claim}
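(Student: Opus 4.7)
The plan is to prove Claim~\ref{claim:2} by induction on $k \in \{0, 1, \ldots, n_0-y_0-1\}$, showing that each horizontal shift $\set{Y}+(k,0) \subseteq \text{span}(\set{Q})$. The base case $k=0$ is essentially immediate from Claim~\ref{claim:1}: we have $\set{Y} \subseteq \text{span}(\set{C})$, and one checks directly from the definitions~\eqref{def:C},~\eqref{def:L},~\eqref{def:Q} that $\set{C} \subseteq \set{Q}$. Indeed, $\set{C}$ has row indices in $[n_0\delta_0, y_0]$, and the only rows in this range that $\set{L}$ touches are at $p_0 = y_0$, where $\set{L}$ coincides with $\set{Y}$, which is removed from $\set{C}$ by construction. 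Hence $\set{Y} \subseteq \text{span}(\set{Q})$.

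For the inductive step, suppose $\set{Y}+(j,0) \subseteq \text{span}(\set{Q})$ for every $j < k$, and fix $k \in \{1, \ldots, n_0-y_0-1\}$. By Lemma~\ref{remark:2} applied with $\Delta = (k,0)$, we obtain $\set{Y}+(k,0) \subseteq \text{span}(\set{C}+(k,0))$. The key structural observation is that since $k \le n_0-y_0-1$, no wraparound occurs in the $p_0$ coordinate, so $\set{C}+(k,0)$ consists of points with row indices in $[n_0\delta_0+k, y_0+k] \subseteq [n_0\delta_0, n_0-1]$ and column indices in $[n_1\delta_1, n_1-1]$, minus the shifted hole $\set{Y}+(k,0)$.

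The crux is a case analysis on the row $r$ of each point in $\set{C}+(k,0)$:
\begin{itemize}
    \item If $r \in [n_0\delta_0+k,\, y_0-1]$, then $r < y_0$, so the point lies outside $\set{L}$ and therefore in $\set{Q}$.
    \item If $r \in [y_0,\, y_0+k-1]$, say $r = y_0 + k'$ with $0 \le k' \le k-1$, then the portion of row $r$ inside $\set{L}$ is precisely $\set{Y}+(k',0)$; points outside this column strip lie in $\set{Q}$, while points inside are in $\text{span}(\set{Q})$ by the inductive hypothesis.
    \item If $r = y_0 + k$, the excluded column strip in $\set{C}+(k,0)$ is exactly $\set{Y}+(k,0)$, so the remaining points of this row lie in $\set{Q}$.
\end{itemize}
Combining these cases yields $\set{C}+(k,0) \subseteq \set{Q} \cup \bigcup_{k'=0}^{k-1}(\set{Y}+(k',0)) \subseteq \text{span}(\set{Q})$, hence $\set{Y}+(k,0) \subseteq \text{span}(\set{Q})$, completing the induction.

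The main obstacle, as in Claim~\ref{claim:1}, is purely bookkeeping: one must verify that the horizontal shift never wraps around (guaranteed by $k \le n_0-y_0-1$), that the vertical column strip of $\set{C}+(k,0)$ is identical to that of $\set{C}$ (which it is, since we shift only in the first coordinate), and that the only ``new'' points outside $\set{Q}$ introduced by shifting are precisely the previously handled strips $\set{Y}+(k',0)$ with $k' < k$. Once these index computations are in place, the induction closes cleanly using the shift-invariance property of Lemma~\ref{remark:2}.
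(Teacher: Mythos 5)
Your proposal is correct and follows essentially the same route as the paper: the same induction on the horizontal shift $k$, the same base case via $\set{C} \subseteq \set{Q}$ and Claim~\ref{claim:1}, and the same key containment $\set{C}+(k,0) \subseteq \set{Q} \cup \bigcup_{t=0}^{k-1}(\set{Y}+(t,0))$ combined with Lemma~\ref{remark:2}. The only difference is that you spell out the row-by-row verification of that containment explicitly, where the paper asserts it with a reference to Fig.~\ref{fig:7}.
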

\begin{proof}
Recall that $\set{L} = \cup_{k=0}^{n_0-y_0-1} (\set{Y} + (k,0))$. We follow a similar proof by induction strategy as \Cref{claim:1} to result in the assertion. In particular, for $k = 0$ the statement follows directly from the fact that $\set{C} \subseteq \set{Q}$ and using \Cref{claim:1} to claim that $\set{Y} \subseteq \text{span} (\set{C})$. Assuming the induction hypothesis that for all $0 \le t \le k-1$, $\set{Y} + (t,0) \subseteq \text{span} (\set{Q})$ we show that $\set{Y} + (k,0) \subseteq \text{span} (\set{Q})$ as long as $k \le n_0-y_0-1$. In particular, observe that from \Cref{claim:1},
\begin{equation} \label{eq:12312111}
    \set{Y} + (0,k) \subseteq \text{span} (\set{C} + (0,k)).
\end{equation}
Next observe that unless $k > n_0-y_0-1$,
\begin{equation} \label{eq:10211}
    \set{C} + (k,0) \subseteq \set{Q} \cup \{ \set{Y} + (t,0) : 0 \le t \le k-1 \}
\end{equation}
if $k > n_0-y_0-1$, then $\set{C} + (k,0)$ shifts far enough that it wraps around and include points in $\set{Z}_0$ - this has null intersection with $\set{Q}$ and $\set{Y} + (t,0)$ for any $t \le n_0-y_0-1$ and the assertion becomes false. A pictorial representation of this fact is provided in Fig.~\ref{fig:7}. By the induction hypothesis for each $0 \le t \le k-1$, $\set{Y} + (t,0) \subseteq \text{span} (\set{Q})$. Combining this fact with \cref{eq:12312111,eq:10211} implies that $\set{Y} + (0,k) \subseteq \text{span} (\set{Q})$ and completes the induction step for $k$.
\end{proof}

Recall from \cref{def:L} that $\set{L}$ is defined as the set of points, $\cup_{k=0}^{n_0-y_0-1} (\set{Y} + (k,0))$. More explicitly, noting that the set of points $\set{Y}$ is defined as $\{ (y_0,\theta) : y_1-n_1(1-\delta_1)(1-\beta) + 1 \le \theta \le y_1 \}$,
\begin{align}
    \set{L} = \{ (p_0,p_1) &: \ y_0 \le p_0 \le n_0-1, \nonumber\\
    &y_1-n_1(1-\delta_1)(1-\beta) + 1 \le p_1 \le y_1\}.
\end{align}

\begin{claim} \label{claim:3}
The set of points $\set{S} \subseteq \set{L} $ irrespective of the location of the point $y = (y_0,y_1)$ (note that the set of points $\set{L}$ is a function of $y_0$ and $y_1$).
\end{claim}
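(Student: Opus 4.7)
The plan is to verify $\set{S} \subseteq \set{L}$ directly by checking, coordinate by coordinate, that every point satisfying the box constraints defining $\set{S}$ in \cref{def:S} also satisfies the box constraints defining $\set{L}$, using only the fact that $y = (y_0, y_1) \in \set{T}$, i.e.,
\begin{equation*}
    \delta_0 n_0 \leq y_0 \leq \delta_0 n_0 + \alpha(1-\delta_0)n_0 - 1, \quad n_1 - \beta(1-\delta_1)n_1 \leq y_1 \leq n_1 - 1.
\end{equation*}
Since both $\set{S}$ and $\set{L}$ are Cartesian products of intervals in the two coordinates, the verification decouples into two independent one-dimensional comparisons, one for the horizontal coordinate and one for the vertical coordinate.

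For the horizontal coordinate, I would show $y_0 \leq p_0 \leq n_0 - 1$ for every $(p_0, p_1) \in \set{S}$. The upper bound is immediate since $\set{S}$ itself requires $p_0 \leq n_0 - 1$. The lower bound follows from the chain $y_0 \leq \delta_0 n_0 + \alpha(1-\delta_0)n_0 - 1 < \delta_0 n_0 + \alpha(1-\delta_0)n_0 \leq p_0$, where the first inequality uses $y \in \set{T}$ and the last uses $p \in \set{S}$.

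For the vertical coordinate, I would verify $y_1 - n_1(1-\delta_1)(1-\beta) + 1 \leq p_1 \leq y_1$ for every $(p_0, p_1) \in \set{S}$. The upper bound follows since $p_1 \leq n_1 - \beta(1-\delta_1)n_1 \leq y_1$, where the second inequality uses the lower bound on $y_1$ from $y \in \set{T}$. For the lower bound, it suffices to check it at the smallest $p_1$ in $\set{S}$, namely $p_1 = \delta_1 n_1 + \beta(1-\delta_1)n_1 + 1$; after rearranging, the required inequality simplifies to
\begin{equation*}
    y_1 \leq \delta_1 n_1 + \beta(1-\delta_1)n_1 + n_1(1-\delta_1)(1-\beta) = \delta_1 n_1 + (1-\delta_1)n_1 = n_1,
\end{equation*}
which holds because $y_1 \leq n_1 - 1$ by $y \in \set{T}$.

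There is no real obstacle here: the claim is essentially an algebraic sanity check confirming that the window $\set{L}$ of vertical width $n_1(1-\delta_1)(1-\beta)$ was deliberately sized so that wherever $y$ lies within the thin strip $\set{T}$, the translated ``column'' $\set{L}$ always covers the entire $\set{S}$ block. The only care needed is bookkeeping around the floor operations implicit in the definitions (which the paper agrees to suppress) and making sure the telescoping cancellation $\beta + (1-\beta) = 1$ is used in the right place to collapse the bound on $y_1$ to $n_1$.
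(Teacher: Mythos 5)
Your proposal is correct and follows essentially the same route as the paper: both arguments reduce the claim to two one-dimensional interval containments, using $y_0 \le \delta_0 n_0 + \alpha(1-\delta_0)n_0 - 1$ for the horizontal coordinate and the pair of bounds $y_1 \ge n_1 - \beta(1-\delta_1)n_1$ and $y_1 \le n_1 - 1$ (with the cancellation $\beta + (1-\beta) = 1$) for the vertical coordinate. The only cosmetic difference is that the paper phrases the verification as nesting an intermediate box inside $\set{L}$ while you check the endpoint inequalities directly.
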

\begin{proof}
Before furnishing the details of the proof, note that a pictorial representation of this statement is provided in Fig.~\ref{fig:7}. First recall that $y = (y_0,y_1)$ is a point in the set $\set{T}$. Therefore, $\delta_0 n_0 \le y_0 \le \delta_0 n_0 + \alpha (1-\delta_0) n_0 - 1$ and $n_1 - \beta (1-\delta_1) n_1 \le y_1 \le n_1-1$. In particular, this means that for any $y$,
\begin{align}
    \left\{ \rule{0cm}{1cm}\right.
    \begin{split}
        &(p_0,p_1) :\\
        &\delta_0 n_0 + \alpha (1-\delta_0) n_0 - 1 \le p_0 \le n_0-1,\\
        &y_1-n_1(1-\delta_1)(1-\beta) + 1 \le p_1 \le y_1
    \end{split} \left.\rule{0cm}{1cm}\right\} \subseteq \set{L}
\end{align}
Furthermore, we see that (i) $y_1 \ge n_1 - \beta (1-\delta_1)n_1$, and (ii) $y_1-n_1(1-\delta_1)(1-\beta) + 1 \le n_1 - n_1 (1-\delta_1)(1-\beta) = n_1\delta_1 + \beta(1-\delta_1)n_1$. Therefore, we have that,
\begin{align}
    \left\{ \rule{0cm}{1cm}\right.
    \begin{split}
        &(p_0,p_1) :\\
        &\delta_0 n_0 + \alpha (1-\delta_0) n_0 - 1 \le p_0 \le n_0-1,\\
        &n_1\delta_1 + \beta(1-\delta_1)n_1 \le p_1 \le n_1 - \beta (1-\delta_1)n_1
    \end{split} \left.\rule{0cm}{1cm}\right\} \subseteq \set{L}
\end{align}
The LHS is exactly the definition of $\set{S}$ in \cref{def:S}.
\end{proof}

From Claims~\ref{claim:2} and~\ref{claim:3} and the definition of $\set{L}$ in \cref{def:L}, we see that $\set{S} \subseteq \text{span} (\set{Q})$. In particular this implies that the each column indexed by points in $\set{S}$ can be expressed as a linear combination of some set of remaining columns that do not belong to $\set{Z}_0$ or $\set{Z}_1$. This implies that for any choice of $\set{P}$ (note that we do not specify a particular choice of $\set{P}$ in the proof as long as it has size $\le T$) that $\rank (G_2) = \rank (G_\set{P})$.

Moreover, we do not specify the particular choice of $\set{P}$ in the proof, so it holds for all sets of size $\le T$.

\begin{figure}[!t]
    \centering
    \includegraphics[width=0.37\textwidth]{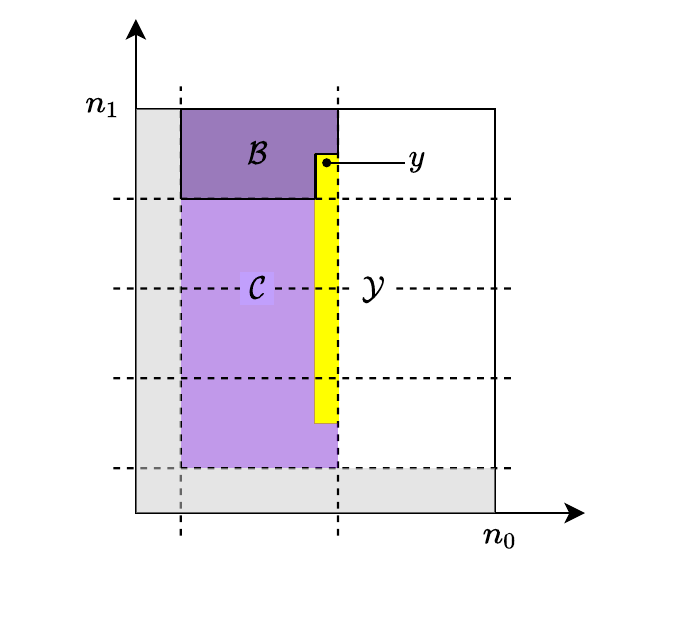}
    \caption{Definition of $\set{B},\set{Y}$ and $\set{C}$}
    \label{fig:4}
\end{figure}

\begin{figure}[!t]
    \centering
    \includegraphics[width=0.37\textwidth]{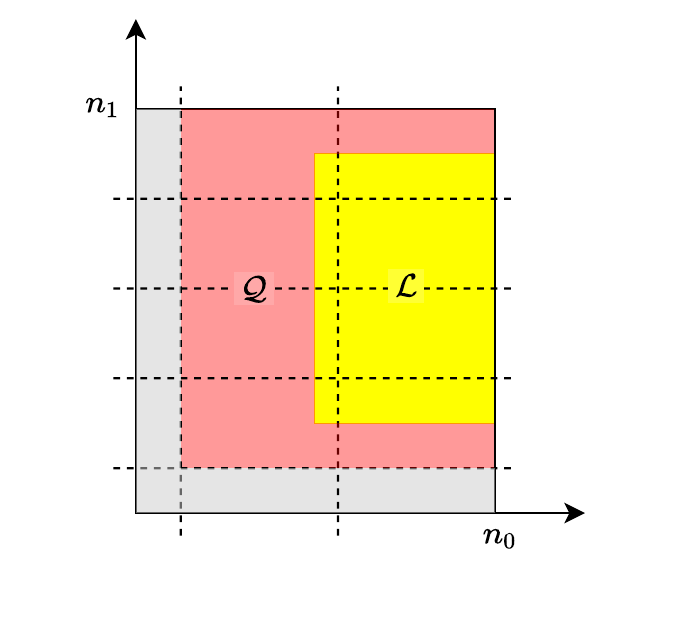}
    \caption{Definition of $\set{Q}$ and $\set{L}$}
    \label{fig:5}
\end{figure}

\begin{figure}[!t]
    \centering
    \includegraphics[width=0.37\textwidth]{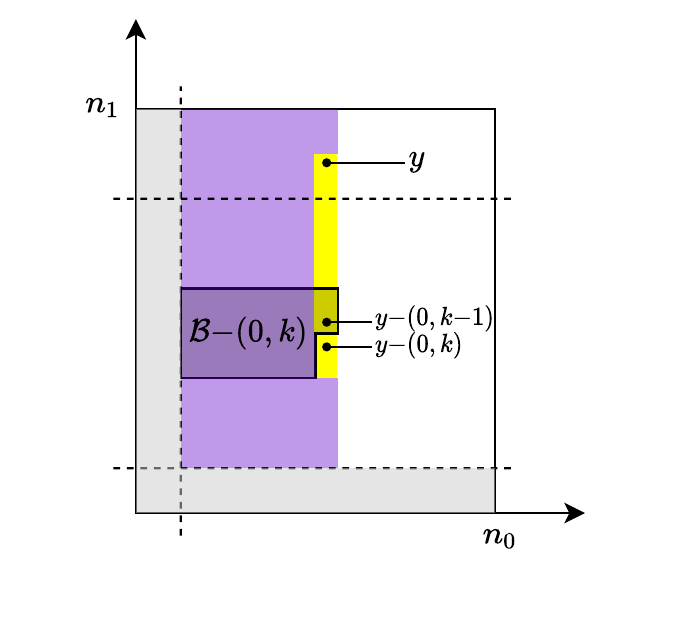}
    \caption{Showing that $\set{B} - (0,k)$ is a subset of the intersection of $\set{C}$ and $\{ y - (0,t) : 0 \le t \le k-1 \}$.}
    \label{fig:6}
\end{figure}

\begin{figure}[!t]
    \centering
    \includegraphics[width=0.37\textwidth]{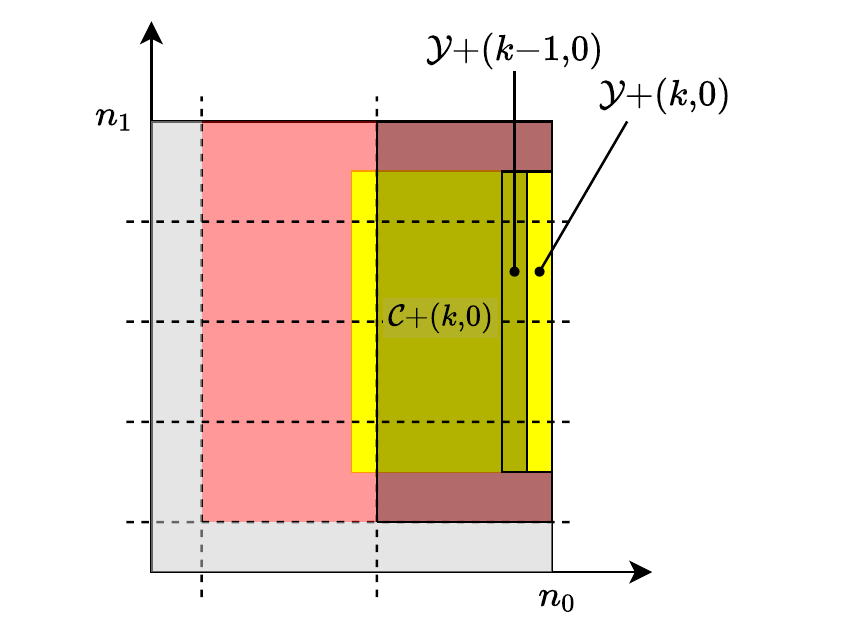}
    \caption{Showing that $\set{C}+(k,0)$ is a subset of the intersection of $\set{Q}$ and $\bigcup_{t=0}^{k-1} (\set{Y} + (t,0))$.}
    \label{fig:7}
\end{figure}

\begin{figure}[!t]
    \centering
    \includegraphics[width=0.37\textwidth]{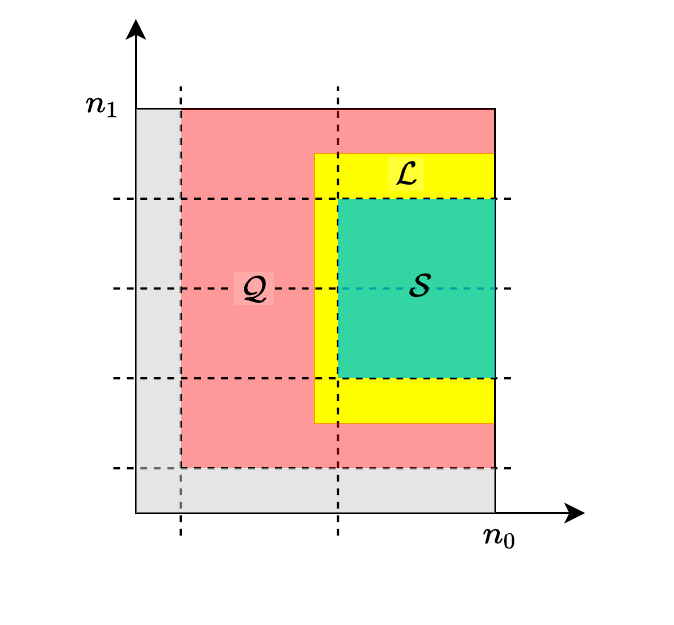}
    \caption{The $\set{S}$ is indeed a subset of $\set{L}$. For the particular choice of $y$ this figure shows that the inclusion is true. More generally we in construct $\set{S}$ as the intersection of $\set{L}$ over all the possible locations of $y \in \set{T}$.}
    \label{fig:8}
\end{figure}

\section{Correctness of \proc{FastSecAgg}}
\label{app:correctness-FastSecAgg}
Correctness essentially follows from the correctness of key agreement and authenticated encryption protocols together with the linearity and $D$-dropout tolerance of \proc{FastShare}.

Specifically, using the correctness of key agreement and authenticated encryption, it is straightforward to show that, in Round 2, each client $i\in\set{C}_1$ computes and sends to the server the sum of shares it receives in Round 1 as follows
\begin{equation}
\label{eq:sum-share-1-2}
\shi{i} = \left(\sum_{j\in\set{C}_1}[\vect{u}_j^1]_i\mid\mid\cdots\mid\mid\sum_{j\in\set{C}_1}[\vect{u}_j^{\lceil L/S \rceil}]_i \right).
\end{equation} 
The linearity property of \proc{FastShare} ensures that the sum of shares is a share of the sum of secret vectors (i.e., client inputs). In other words, by linearity of \proc{FastShare}, it holds that
\begin{equation}
\label{eq:sum-share-2}
\shi{i} = \left(\left[\sum_{j\in\set{C}_1}\vect{u}_j^1\right]_i\mid\mid\cdots\mid\mid\left[\sum_{j\in\set{C}_1}\vect{u}_j^{\lceil L/S \rceil}\right]_i \right).
\end{equation} 
Recall that $\shi{i}^{\ell}$ denotes the $\ell$-th coefficient of $\shi{i}$. Therefore, from~\eqref{eq:sum-share-2}, it holds, for $1\leq\ell\leq\lceil L/S \rceil$, that
\begin{equation}
\label{eq:sum-share-ell-2}
\shi{i}^{\ell} = \left[\sum_{j\in\set{C}_1}\vect{u}_j^{\ell}\right]_i.
\end{equation} 

Let $\set{C}_2$ be a random subset of at least $N-D$ clients that survive in Round 2, i.e., clients in $\set{C}_2$ send their sum-shares to the server. 
Now, from~\eqref{eq:sum-share-ell-2} and the $D$-dropout tolerance of \proc{FastShare}, it holds that $\proc{FastRecon}\left(\{(i,\shi{i}^{\ell})\}_{i\in\set{C}_2}\right)=\sum_{j\in\set{C}_1}\vect{u}_j^{\ell}$ with probability at least $1 - 1/\textrm{poly}\: N$ for all $1\leq\ell\leq\lceil L/S \rceil$. 
Note that we do not need to use a union bound here because if \proc{FastRecon} succeeds (i.e., does not output $\perp$) for $\ell = 1$, then it succeeds for each $1\leq\ell\leq\lceil L/S \rceil$, since the locations of missing indices (among $N$) of shares for every $1\leq\ell\leq\lceil L/S \rceil$ are the same. 

Hence, we get 
$$[\vect{z}^1\:\:\vect{z}^2\:\:\cdots\:\:\vect{Z}^{\lceil L/S\rceil}] = \left[\sum_{j\in\set{C}_1}\vect{u}_j^{1}\:\:\sum_{j\in\set{C}_1}\vect{u}_j^{2}\:\:\cdots\:\:\sum_{j\in\set{C}_1}\vect{u}_j^{\lceil L/S\rceil}\right]$$ 
with probability at least $1 - 1/\textrm{poly}\: N$. In other words, $\vect{z} = \sum_{i\in\set{C}_1}\vect{u}_i$ with probability at least $1 - 1/\textrm{poly}\: N$. This concludes the proof.

\section{Security of \proc{FastSecAgg}}
\label{app:security-FastSecAgg}

In \proc{FastSecAgg}, each client $i$ first partitions their input $\vect{u}_i$ to $\lceil L/S \rceil$ vectors $\vect{u}_i^1$, $\vect{u}_i^2$, $\ldots$, $\vect{u}_i^{\lceil L/S \rceil}$, each of length at most $S$, and computes shares for each $\vect{u}_i^{\ell}$, $1\leq\ell\leq\lceil L/S \rceil$. 
In other words, we have
\begin{equation}
    \label{eq:shares-u-i-l}
    \left\{\left(j,\share{\vect{u}_i^{\ell}}{j}\right)\right\}_{j\in\set{C}}\leftarrow\proc{FastShare}(\vect{u}_i^{\ell},\set{C}),
\end{equation}
where independent private randomness is used for each $1\leq\ell\leq \lceil {L}/{S} \rceil$.
Let us denote the set of shares that client $i$ generates for client $j$ as $\share{\vect{u}_i}{j}$, i.e., we have
\begin{equation}
    \label{eq:share-i-to-j}
    \share{\vect{u}_i}{j} = \left\{\share{\vect{u}_i^1}{j}, \share{\vect{u}_i^2}{j}, \ldots,  {[\vect{u}_i^{\lceil L/S\rceil}]}_{j}\right\}.
\end{equation}

It is straightforward to show that for any set of up to $T$ clients $\set{P}\subset\set{C}$, the shares $\{\share{\vect{u}_i}{j}\}_{j\in\set{P}}$ reveal no information about $\vect{u}_i$. 

\begin{lemma}
\label{lem:privacy-for-set-of-shares}
For every $\vect{u}_i,\vect{v}_i\in\GF{q}^L$, for any $\set{P}\subset\set{C}$ such that $|\set{P}|\leq T$, the distribution of $\{\share{\vect{u}_i}{j}\}_{j\in\set{P}}$ is identical to that of $\{\share{\vect{v}_i}{j}\}_{j\in\set{P}}$.
\end{lemma}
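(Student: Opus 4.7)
The plan is to reduce the claim for the length-$L$ input directly to the $T$-privacy guarantee of \proc{FastShare} established in Theorem~\ref{thm:main}, by exploiting the independence of the randomness used across the $\lceil L/S\rceil$ partitioned chunks.

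First I would fix an arbitrary $\set{P}\subset\set{C}$ with $|\set{P}|\leq T$ and two inputs $\vect{u}_i,\vect{v}_i\in\GF{q}^L$, and recall that \proc{FastSecAgg} partitions each input into sub-vectors $\vect{u}_i^1,\ldots,\vect{u}_i^{\lceil L/S\rceil}$ (resp.\ $\vect{v}_i^1,\ldots,\vect{v}_i^{\lceil L/S\rceil}$) of length at most $S$, and then invokes \proc{FastShare} once per chunk, each time with \emph{fresh and independent} internal randomness (the uniform random masks placed at the non-secret, non-zero grid locations). Writing out the joint distribution of the shares destined for clients in $\set{P}$ via~\eqref{eq:share-i-to-j}, this independence implies that the joint distribution of $\{\share{\vect{u}_i}{j}\}_{j\in\set{P}}$ factors as the product of the $\lceil L/S\rceil$ marginal distributions of $\{\share{\vect{u}_i^\ell}{j}\}_{j\in\set{P}}$ across $\ell$, and similarly for $\vect{v}_i$.

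Next I would invoke the $T$-privacy property from Theorem~\ref{thm:main}: for each $\ell$, and for every pair of secret vectors $\vect{u}_i^\ell,\vect{v}_i^\ell\in\GF{q}^S$, the marginal distributions $\{\share{\vect{u}_i^\ell}{j}\}_{j\in\set{P}}$ and $\{\share{\vect{v}_i^\ell}{j}\}_{j\in\set{P}}$ are identical whenever $|\set{P}|\leq T$. Applying this chunk by chunk, each factor in the product decomposition of the two joint distributions coincides, and by independence across $\ell$ the full joint distributions coincide as well, which is exactly the claim.

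The only subtle point (and the part I would state carefully rather than a genuine obstacle) is justifying the product-of-marginals step: one must be explicit that the private randomness used in the $\ell$-th call to \proc{FastShare} is sampled independently of the randomness used in all other calls, as specified in Round~1 of Algorithm~\ref{alg:FastSecAgg-algorithm}. Once this is explicit, no further work is needed beyond citing Theorem~\ref{thm:main}, since the lemma is essentially a ``parallel composition'' of the single-chunk privacy guarantee.
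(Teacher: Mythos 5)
Your proposal is correct and follows essentially the same route as the paper's own proof: both arguments use the independence of the private randomness across the $\lceil L/S\rceil$ invocations of \proc{FastShare} to factor the joint distribution of the shares restricted to $\set{P}$ into a product of per-chunk marginals, and then apply the $T$-privacy guarantee of Theorem~\ref{thm:main} to each factor. No gaps; the "subtle point" you flag (explicit independence of the per-chunk randomness) is exactly the observation the paper makes as well.
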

\begin{proof}
Here we treat $\vect{u}_i$ to be a random variable (with arbitrary distribution), and $\{\share{\vect{u}_i}{j}\}_{j\in\set{P}}$ to be a conditional random variable given a realization of $\vect{u}_i$. 
By slightly abusing the notation for simplicity, denote $\vect{u}_i$ as a set $\vect{u}_i = \{\vect{u}_i^1,\vect{u}_i^2,\ldots,\vect{u}_i^{\lceil L/S \rceil}\}$ (instead of a vector).
Further, for simplicity, define
$$\share{\vect{u}_i^{\ell}}{\set{P}} = 
\{\share{\vect{u}_i}{j}\}_{j\in\set{P}};\quad
\share{\vect{u}_i^{\ell}}{\set{P}} = \left\{\share{\vect{u}_i^{\ell}}{j}\right\}_{j\in\set{P}}, \:\: 1\leq\ell\leq\lceil L/S \rceil.$$

Now, observe that, given $\vect{u}_i^{\ell}$, the distribution of $\share{\vect{u}_i^{\ell}}{\set{P}}$ is conditionally independent of $\vect{u}_i\setminus\{\vect{u}_i^{\ell}\}$ and $\share{\vect{u}_i}{\set{P}}\setminus\left\{\share{\vect{u}_i^{\ell}}{\set{P}}\right\}$.
This is because, for the $\ell$-th instantiation,  $1\leq\ell\leq\lceil L/S\rceil$, \proc{FastShare} takes only $\vect{u}_i^{\ell}$ as its input and uses independent private randomness.
Therefore, the distribution of $\share{\vect{u}_i}{\set{P}}$ factors into the product of the distributions of $\share{\vect{u}_i^{\ell}}{\set{P}}$. The proof then follows by applying the $T$-privacy property for each instantiation of $\proc{FastShare}$.
\end{proof}

We prove Theorem~\ref{thm:security-FastSecAgg} by a standard hybrid argument. We will present a sequence of hybrids starting that start from the real execution and transition to the simulated execution where each two consecutive hybrids are computationally indistinguishable.

\noindent$\textbf{Hybrid}_0:$ This random variable is $\mathsf{REAL}_{\set{M}}^{\set{C},T,\lambda}(\vect{u}_{\set{C}},\set{C}_0,\set{C}_1,\set{C}_2)$, the joint view of the parites $\set{M}$  in the real execution of the protocol.

\noindent$\textbf{Hybrid}_1:$ In this hybrid, we change the behavior of honest clients in
$\set{C}_1\setminus\set{M}$ so that instead of using $\proc{KA.agree}(\ksk{i},\kpk{j})$ to encrypt and decrypt messages, we run the key agreement simulator $\mathsf{Sim}_{KA}(s_{i,j},\kpk{j})$, where $s_{i,j}$ is chosen uniformly at random.  
The security of the key agreement protocol guarantees that this hybrid is  indistinguishable from the previous one.

\noindent$\textbf{Hybrid}_2:$ In this hybrid, for every client $i\in\set{C}_1\setminus\set{M}$, we replace the shares of $\vect{u}_i$ sent to other honest clients in Round 1 with zeros, which the adversary observes encrypted as $\esh{i}{j}$. Since only the contents of the ciphertexts are changed, the IND-CPA security of the encryption scheme guarantees that this hybrid is indistinguishable from the previous one. 

\noindent$\textbf{Hybrid}_3:$ In this hybrid, for every client $i\in\set{C}_1\setminus\set{M}$, we replace the shares of $\vect{u}_i$ sent to the corrupt clients in $\set{M}$ in Round 1 with shares of $\vect{v}_i$, which are chosen as follows depending on $\vect{z}$. If $\vect{z} = \perp$, then $\{\vect{v}_i\}_{i\in\set{C}_1\setminus\set{M}}$ are chosen uniformly at random. Otherwise, $\{\vect{v}_i\}_{i\in\set{C}_1\setminus\set{M}}$ are chosen uniformly at random subject to $\sum_{i\in\set{C}_1\setminus\set{M}}\vect{v}_i = \vect{z} \left(= \sum_{i\in\set{C}_1\setminus\set{M}}\vect{u}_i\right)$. The joint view of corrupt parties contains only $|\set{M}|\leq T$ shares of each $\vect{v}_i$. From Lemma~\ref{lem:privacy-for-set-of-shares}, it follows that this hybrid is identically distributed to the previous one.

If $\vect{z} = \perp$, then we do not need to consider the further hybrids, and let $\mathsf{SIM}$ is defined to sample from $\textbf{Hybrid}_3$. This distribution can be computed from the inputs $\vect{z}$, $\set{C}_0$, and $\set{C}_1$. In the following hybrids, we assume $\vect{z}\ne \perp$.

\noindent$\textbf{Hybrid}_4:$ Partition $\vect{z}$ into $\lceil L/S \rceil$ vectors, $\vect{z}^1$, $\vect{z}^2$, $\ldots$, $\vect{z}^{\lceil L/S \rceil}$, each of length at most $S$. In this hybrid, for every client $i\in\set{C}_2\setminus\set{M}$ and each $1\leq\ell\leq\lceil L/S \rceil$, we replace the share
$\shi{i}^{\ell}$ with the $i$-th share of $\vect{z}^{\ell} + \sum_{j\in\set{M}}\vect{u}_j^{\ell}$, i.e., $\share{\vect{z}^{\ell}+\sum_{j\in\set{M}}\vect{u}_j^{\ell}}{i}$. 
Since $\vect{z} = \sum_{i\in\set{C}_1\setminus\set{M}}\vect{u}_i$, from~\eqref{eq:sum-share-2} and~\eqref{eq:sum-share-ell-2}, it follows that the distribution of $\left\{\share{\vect{z}+\sum_{j\in\set{M}}\vect{u}_j}{i}\right\}_{i\in\set{C}_2}$ is identical to that of $\{\shi{i}\}_{i\in\set{C}_2}$. Therefore, this hybrid is identically distributed to the previous one.

We define a PPT simulator $\mathsf{SIM}$ to sample from the distribution described in the last hybrid.
This distribution can be computed from the inputs $\vect{z}$, $\vect{u}_{\set{M}}$, $\set{C}_0$, $\set{C}_1$, and $\set{C}_2$. The argument above proves that the output of the simulator is computationally indistinguishable from the output of $\mathsf{REAL}$, which concludes the proof.

\section{Comparison of \proc{FastShare} with Shamir's Scheme and Relation to Locally Recoverable Codes}
\label{app:LRCs}
\noindent \textbf{Comparison with the Shamir's scheme:} A multi-secret sharing variant of the Shamir's scheme can be constructed as follows (see, e.g.,~\cite{Franklin-Yung:92}). We consider the case of generating $N$ shares (each in $\GF{q}$, $q\geq N$) from $S$ secrets (each in $\GF{q}$) such that the privacy threshold is $T$ and dropout tolerance is $N-S-T$. Construct a polynomial of degree-$(S+T-1)$ with the secrets and $T$ random masks (each in $\GF{q}$) as coefficients. Its evaluations on $N$ distinct non-zero points in $\GF{q}$ yield the $N$ shares for the Shamir's scheme. 

Note that it is possible to recover the secrets from any $S+T$ shares via polynomial interpolation. Although there exist fast polynomial interpolation algorithms that can interpolate a degree-$n$ polynomial in $\bigOh{n\log^2 n}$ time (see, e.g.,~\cite{Horowitz:72,Borodin:74,Gathen:book:99}), the actual complexity of these algorithms is typically large due to huge constants hidden by the big-Oh notation as noted in ~\cite{Borodin:74,Gathen:book:99,Luby:01}. 
This limits the scalability of such fast interpolation algorithms.

Since the shares generated by Shamir are evaluations of a degree-$(S+T-1)$ polynomial on $N$ distinct non-zero points, it is easy to see that the shares form a codeword of a Reed-Solomon code of block-length $N$ and dimension $S+T$. On the other hand, \proc{FastShare} constructs a polynomial of degree-$(N-1)$, and evaluates it on primitive $N$-th roots of unity. The shares generated by \proc{FastShare} form a codeword of a product code with Reed-Solomon codes as component codes (see Remark~\ref{rem:product-codes}). 

Suppose we choose primitive $N$-th roots of unity as the evaluation points for the Shamir's multi-secret sharing scheme. Then, the Shamir's scheme essentially computes the Fourier transform of the ``signal'' consisting of $S$ secrets, concatenated with $T$ random masks followed by $N-S-T$ zeros. In contrast, in case of \proc{FastShare}, we construct the signal by judiciously placing zeros, which ensures the product code structure on the shares.

\vspace{4pt}
\noindent\textbf{Relation to LRCs:}
The variant of \proc{FastShare} described in~\ref{sec:row-codes} is related to a class of erasure codes called Locally Recoverable Codes (LRCs) (see~\cite{Gopalan:12,Prakash:12,TamoB:14}, and references therein). Specifically, consider LRCs with $(\ell,r)$ locality~\cite{Prakash:12} in which coordinates can be partitioned into several subsets of cardinality $\ell$ such that the coordinates in each subset form a maximum distance separable (MDS) code of length $\ell$ and dimension $r$. 
An efficient construction of such LRCs is presented in~\cite[Section V-C]{TamoB:14}. 
Note that the shares generated by \proc{FastShare} (as described in~\ref{sec:row-codes}) form codewods of an LRC with $(n_0,(1-\delta_0)n_0)$ locality. 
Moreover, it is not difficult to show that \proc{FastShare} as an {\it evaluation code} is isomorphic to the LRC constructed in~\cite[Section V-C]{TamoB:14} when the {\it good polynomial} is $x^{\ell}$ 
(see~\cite[Sections III-A]{TamoB:14} for  the definition of a good polynomial) and the evaluation points are primitive $N$-th roots of unity.

\end{document}